 \newtheorem{theorem}{Theorem}[section]
 \newtheorem{lemma}[theorem]{Lemma}
\newcommand*{\bbC}{\mathbb{C}}
\newcommand*{\complex}{\bbC}
\newcommand*{\cC}{\mathcal{C}}
\newcommand*{\cD}{\mathcal{D}}
\newcommand*{\cE}{\mathcal{E}}
\newcommand*{\cH}{\mathcal{H}}
\newcommand*{\cM}{\mathcal{M}}
\newcommand*{\cP}{\mathcal{P}}
\newcommand*{\cS}{\mathcal{S}}
\newcommand*{\bracket}{\mathcal{M}\mbox{-map}}
\newcommand*{\tr}{\mathsf{tr}}
\newcommand*{\ket}[1]{|#1\rangle}
\newcommand*{\bra}[1]{\langle #1|}
\newcommand*{\proj}[1]{\ket{#1}\bra{#1}}
\newcommand{\braket}[2]{\langle #1|#2\rangle}       
\begin{document}

\title{Cumulants and the moment algebra: tools for analysing
weak measurements}

\author{Johan \surname{{\AA}berg}}
\email[]{jaaberg@phys.ethz.ch} 
\affiliation{Institute for Theoretical Physics, ETH Zurich,
             8093 Zurich, Switzerland}
\affiliation{Communication Technology Laboratory, ETH Zurich, 8092 Zurich, Switzerland}

\author{Graeme \surname{Mitchison}}
\email[]{g.j.mitchison@damtp.cam.ac.uk} \affiliation{Centre for
Quantum Computation, DAMTP,
             University of Cambridge, 
             Cambridge CB3 0WA, UK}

\begin{abstract}
Recently it has been shown that cumulants significantly simplify the
analysis of multipartite weak measurements. Here we consider
the mathematical structure that underlies this, and find that
it can be formulated in terms of what we call the moment algebra.
Apart from resulting in simpler proofs, the flexibility of this
structure allows generalizations of the original results to a number
of weak measurement scenarios, including one where the weakly
interacting pointers reach thermal equilibrium with the probed system.
\end{abstract}

\maketitle




\section{Introduction}

For many readers, the word ``cumulant'', if it means anything,
probably evokes a slight feeling of discomfort: a recollection,
perhaps, of a baffling definition and a paper left half-read. Yet, as
we hope to show here, cumulants should have pleasurable
associations. They arise as part of an algebraic structure, the moment
algebra, that can be defined very simply yet has striking
properties. It has the familiar operations of complex analysis -- a
multiplication and inverse, functions like $\log$ and $\exp$, a
derivative operation, power series expansions, etc. -- but all
transposed into a very different setting, with functions defined on a
lattice of finite subsets instead of the continuum of a complex
space, and with curious-looking new definitions for the
functions.  In the moment algebra, the cumulant is just the log
function, though many textbooks do an impeccable job of concealing
this fact.

Cumulants have a long history, with roots in statistics. They
were probably first considered by Thorwald N. Thiele
\cite{Thiele1,Thiele2} as ``half-invariants''; they then went through
a protean sequence of name changes \cite{Fisher1,Craig,Sylvester}
until the current name \cite{Fisher2} finally stuck.  Since they are
tools of statistics and probability theory \cite{KendallStuart77}, it
is perhaps unsurprising that they have been applied in statistical
mechanics \cite{Kubo, KahnUhlenbeck, Sylvester, Royer}, notably for
the calculation of virial coefficients and perturbation expansions of
the free energy, as well as in solid state physics, quantum chemistry, and
 quantum field theory (see e.g., \cite{Fulde, Kladko, Archontis,
  Wenbook}). Other studies have focused more directly on the cumulants
as, in some sense, genuine multipartite correlation measures, both in
a classical setting \cite{Carruthers}, and a quantum setting
\cite{Zhou}.

The second component of this paper is weak measurement
\cite{AharonovRohrlich05, AAV88}.  This is a way of obtaining
information about a system while perturbing it only a little,
by coupling the system weakly to a pointer. The imprecision of the
measurement outcome is compensated for by running many repeats of the
protocol, each time with a freshly prepared system. The measurement
results obtained this way are often surprising \cite{YH3}, and give
insight into the underlying physics, as in the analysis \cite{Aha} of
Hardy's paradox \cite{Hardy}. Other examples include phenomena in
fiber optics \cite{Brunner} and photonic crystals \cite{Solli}.

Here we consider multipartite weak measurements, by which we mean
measurements involving more than one pointer
\cite{ReschSteinberg,Resch, Lundeen,MJP07}. The moments of pointer
observables (i.e. the expectations of products of those observables)
turn out to depend in an extremely complicated way on  weak
values. Despite this, it has recently been shown that the cumulants of
pointer moments are very simply related to cumulants of weak values
\cite{GraemeCumulants}. This suggests that the role cumulants play in
simplifying perturbation expansions in statistical mechanics may have
an analogue in weak measurement. Note that, given the weak value
cumulants, one may if one wishes obtain the weak values themselves by
the inverse operation to the cumulant (the exponential in the moment
algebra); this gives an operational procedure for computing weak
values.

With the help of the moment algebra, we uncover some of the
mathematical structure that underlies the favourable interplay between
cumulants and weak measurements.  We also show that these results can
be generalized by relaxing some of the assumptions behind weak
measurement.  For instance, we consider a situation where weak
measurements are performed over an extended period during which the
system undergoes continuous evolution.  This is related, by what can
be broadly described as an imaginary time transformation, to another
scenario where the pointers and the system correlate via
thermalization; we call this ``thermal weak measurement''.  Thus in
general we get a broader view of weak measurement, together with some
tools for making its analysis more tractable.




\section{\label{convalgebra}The moment algebra and cumulants}

Here we introduce what we call the moment algebra. This can be
regarded as a natural setting for discussions about cumulants, and
provides a handy formalism for the proofs in the rest of this paper.

Let $\Omega_n$ be the set of integers $1, 2, \dots, n$, for some
$n>0$.  Let $\cM_n$ denote the set of functions that assign a
complex number $f(a) \in \complex$ to every subset $a$ of $\Omega_n$,
including the empty set $\emptyset$. We will refer to such an $f$ as
an $\bracket$.

If $f$ and $g$ are two $\bracket$s in $\cM_n$, we can define their
product $fg$ simply as the new $\bracket$ $[fg](a)=f(a)g(a)$, for any
subset $a$ of $\Omega_n$. However, there is another product, the {\em
  convolution product} $f*g$, which has particularly interesting
properties. If $a=\{a_1, \ldots, a_k\}$, let $\partial_a f$ denote the
formal derivative $(\partial/\partial \xi_{a_1} \ldots
\partial/\partial \xi_{a_k}) f$, where the $\xi_i$ are notional
variables that we never deal with explicitly. We now define
\begin{align}
\label{product} (f*g)(a)=\partial_a(fg),
\end{align}
where the right hand side is interpreted as follows: In the standard
expression for the derivative of the product $fg$, we make the
replacements $\partial_{b}f \to f(b)$ and $\partial_{b}g \to g(b)$ for
any subset $b \subset a$; i.e. we replace the derivative
$\partial_{b}f$ by the value $f(b)$ of the $\bracket$ f on $b$,
and similarly with $\partial_{b}g$. We also replace plain $f$ by
$f(\emptyset)$. As an example, suppose $a=\{1,2\}$. Then we have
\begin{align}
\label{diff} \partial_{1,2}(fg)=(\partial_{1,2}f)g+(\partial_1f)(\partial_2g)+(\partial_2f)(\partial_1g)+f(\partial_{1,2}g),
\end{align}
and after replacing the derivatives we obtain the convolution product evaluated at $a = \{1,2\}$,
\begin{align}
\label{product2} (f*g)(1,2)=f(1,2)g(\emptyset)+f(1)g(2)+f(2)g(1)+f(\emptyset)g(1,2).
\end{align}
In general, (\ref{product}) gives the explicit rule
\begin{align}
\label{explicit-product} (f*g)(a)&=\sum_{a_1 \cup a_2=a}f(a_1)g(a_2)
\end{align} 
for any subset $a\neq \emptyset$, where the sum runs over all ordered
bipartitions $(a_{1},a_{2})$ of $a$, including $(\emptyset,a)$ and
$(a,\emptyset)$ (treated as distinct). In the case $a =\emptyset$ we
find $(f*g)(\emptyset) = f(\emptyset)g(\emptyset)$.  The vector
space of $\bracket$s, $\cM_n$, together with the convolution product
becomes an algebra, which we also denote by $\cM_n$ and call the {\em
moment algebra}.

The terms ``moment'' and ``convolution'' are inspired by the following
construction.  Let $\phi(x_{1},\ldots x_{n})$ be a complex valued
integrable function of $n$ complex variables. Define an $\bracket$ by
\begin{equation}
\label{phi-map} f_{\phi}(a) = \int x_{a_{k}} \cdots
x_{a_{1}}\phi(x_1, \ldots, x_n)dx_1 \ldots dx_n,
\end{equation}
for each subset $a = \{a_{k},\ldots, a_{1}\}$. Hence, the $\bracket$
$f_{\phi}$ assigns to each subset $a$ the moment of the function
$\phi$ corresponding to $a$.  If we now choose another function
$\psi$, then $f_{\phi}*f_{\psi}$ gives moments of the usual
convolution $\phi*\psi(y_1, \ldots, y_n)=\int \phi(x_1, \ldots,
x_n)\psi(y_1-x_1, \ldots, y_n-x_n)dx_1 \ldots dx_n$. We can write this
succinctly as $f_{\phi}*f_{\psi}=f_{\phi*\psi}$, where the subscript
`$\phi*\psi$' is the usual convolution product.  In fact, any
$\bracket$ can be represented by an $f_{\phi}$ in this way, for some
(non-unique) $\phi$, but we do not make use of this, and proceed
entirely within our abstract algebra framework.

There is an identity, $1^*$ in $\cM_n$, defined by
\begin{align}\label{identity}
1^*(\emptyset)=1, \ \ 1^*(a)=0 \ \mbox{ for }\  a \ne \emptyset.
\end{align}
We have
\begin{align}
f*1^*=1^**f=f.
\end{align}
Given a sufficiently differentiable mapping
$F:\mathbb{C}\rightarrow\mathbb{C}$ we can define a mapping
$F^{*}:\cM_n\rightarrow\cM_n$ via
\begin{equation}
\label{Fstar} (F^{*}f)(a) = \partial_{a}F(f),
\end{equation}
where we assume that the formal derivative operates according to the
standard chain rule, giving $\partial_a F(f)$ as a function of the
formal derivatives $\partial_b f$ of $f$. Note that $F^{*}f(\emptyset) = F(f(\emptyset))$. 
Equation (\ref{Fstar})
defines an operation ``$*$" taking us from a function $F:\complex \to
\complex$ to a function $F^*:\cM_n \to \cM_n$.  Furthermore ``$*$" is a
homomorphism under composition; viz.,
\begin{align}\label{composition}
(FG)^*=F^*G^*,
\end{align}
for any two functions $F,G :\complex \to \complex$.

We can readily extend the definition (\ref{Fstar}) of the ``$*$''
operation to operate on functions $F$ with several complex variables, e.g., 
if $F:\mathbb{C}^{2}\rightarrow \mathbb{C}$, and we have two  $\bracket$s $f$ and $g$, then
\begin{align}
F^*(f,g)(a)=\partial_a F(f,g).
\end{align}
If we now choose $G(f,g)=fg$, we can use the extended definition to find  $G^*(f,g)=f*g$. 
We can furthermore apply (\ref{composition}) to obtain
\begin{align}
\label{hom}
F^*(f*g)=(F(fg))^*.
\end{align}
Thus ``$*$" enables us to carry over maps of complex numbers to $\cM_n$,
preserving all their properties. For instance, if $F(f)=f^{-1}$, we
obtain an inverse in $\cM_n$, defined whenever $f(\emptyset) \ne 0$ by
\begin{align}
\label{inverse} (f^{-1*})(a)&=\partial_a(f^{-1}),
\end{align}
where $f^{-1}$ is the multiplicative inverse $1/f(a)$. Thus
$f*f^{-1*}=1^*$. Two other operations are
\begin{align}
\label{lg} (\log^* f)(a)&=\partial_a \log(f),\\
\label{xp} (\exp^* f)(a)&=\partial_a \exp(f),
\end{align}
where ``$\log$'' here and elsewhere means $\log_e$. From
(\ref{composition}) we deduce that $\exp^*(\log^*)=1^*$, and from
(\ref{hom}) that
\begin{equation}
\label{logprod}
\log^*(f*g)=\log^* f+\log^* g,
\end{equation} 
and $\log^*(f^{-1*})=-\log^*f$. 

The prescription for the inverse, (\ref{inverse}), can easily be
turned into an explicit rule by formal differentiation. One finds
\begin{align}
\label{inverse0} f^{-1*}(\emptyset)&=1/f(\emptyset),\\
\label{inverse1} f^{-1*}(1)&=-f(1)/f(\emptyset)^2,\\
\label{inverse2} f^{-1*}(1,2)&=-f(1,2)/f(\emptyset)^2+2f(1)f(2)/f(\emptyset)^3,
\end{align}
and in general
\begin{equation}
\label{explicitinverse}
f^{-1*}(a)=\sum_{p \in \lambda(a)} \frac{|p|!(-1)^{|p|}}{f(\emptyset)^{|p|+1}} \Pi_{c \in p} f(c),
\end{equation}
the sum being taken over all partitions $\lambda(a)$ of $a$, with
$|p|$ denoting the number of parts (subsets) of the partition
$p$. This follows at once from the combinatorial version of Fa\'{a} di
Bruno's rule \cite{Faa} (and the moment algebra acquires an added grace by this use of the
theorem of a beatified mathematician). It is remarkable
that the simple definition in (\ref{Fstar}) leads to such a
rich structure. 

Similarly, Fa\'a di Bruno's rule  gives
\begin{equation}
\label{log-explicit} \log^* f(a)=\sum_{p\in\lambda(a)} (|p|-1)!(-1)^{|p|-1} \frac{\Pi_{c\in p} f(c)}{f(\emptyset)^{|p|}}, \quad a\neq \emptyset,
\end{equation}
which is well-defined for $\bracket$s $f$ such that $f(\emptyset) \neq
0$. From (\ref{lg}) and (\ref{log-explicit}) we find
\begin{align}
\label{log-emptyset} \log^{*}f(\emptyset) &= \log f(\emptyset),\\
\label{log1} \log^* f(1)&=f(1)/f(\emptyset),\\
\label{log12} \log^* f(1,2)&=f(1,2)/f(\emptyset)-f(1)f(2)/f(\emptyset)^2.
\end{align}
We can make a similar calculation for $\exp^{*}$ and find
\begin{equation}\label{exp-explicit}
\exp^{*}f(a) = e^{f(\emptyset)}\sum_{p\in \lambda(a)}\Pi_{c\in p}f(c).
\end{equation}

We shall refer to $\log^*f$ as the \emph{cumulant} of the $\bracket$
$f$, and $\exp^* f$ as its \emph{anticumulant}. Cumulants were
originally introduced in the context of statistics and probability
theory \cite{KendallStuart77}. To formulate these ``classical"
cumulants within this framework we let $X_{1},\ldots, X_{n}$ be a
collection of random variables, and define an $\bracket$ $f$ by $f(a)
= \langle \Pi_{j\in a}X_{j}\rangle$, for each subset $a\subseteq
\{1,\ldots,n\} = \Omega_{n}$, where $\langle \cdots\rangle$ is the
expectation value of the product of the random variables. (We also
take $f(\emptyset) = \langle 1 \rangle=1$.) Then
\begin{align}\label{classical}
\log^{*}f(a)
\equiv \log^{*}\langle \Pi_{j\in a}X_{j}\rangle
\end{align}
is precisely the classical joint cumulant \cite{Royer} of the random
variables $\{X_{j}\}_{j\in a}$, and $\exp^*f(a)$ is their classical
anticumulant.  This justifies us in using the same terminology for
$\log^*f$ and $\exp^* f$ in the more general situation where $f$ is an
arbitrary $\bracket$.

We now use the moment algebra to rederive some properties of
cumulants in this more general setting. Suppose that an $\bracket$ $f$
satisfies 
\begin{equation}
f(c) = f(c\cap A)f(c\cap B), \quad \forall c\subseteq \Omega_n,
\end{equation}
where $\{A,B\}$ is a bipartition of $\Omega_n$, i.e., $A\cap B =
\emptyset$ and $A\cup B = \Omega_n$. We say that $f$ \emph{factorizes}
with respect to $\{A, B\}$ on $\Omega_n$. Then $\log^* f(c)=0$ for any
nonempty $c \subset \Omega_n$ such that $c \not\subseteq A$ or $c
\not\subseteq B$, and in particular $\log^* f(\Omega_n)=0$. In other
words the cumulant of factorizing $\bracket$s vanishes. To see this,
we note that we can write $f=f_A*f_B$, where $f_A(c) = f(c)$ if $c
\subset A$ and $f_A(c) = 0$ otherwise, and $f_{B}(c) = f(c)$ if $c
\subset B$ and $f_B(c) = 0$ otherwise. Then
\begin{align}
\log^* f(c)=\log^*f_A(c)+\log^*f_B(c),
\end{align}
and this is zero for any $c$ that is not either entirely in $A$ or
entirely in $B$, since then there is some element $a$
of $c$ with $a \notin A$, and (\ref{log-explicit}) shows that
$\log^*f_A$ must vanish; but this is also true of $\log^*f_B$.
It can be shown that this property characterizes cumulants
\cite{Percus75,Simon79}. 

We mention here a property of cumulants that we will
frequently make use of. Given $\alpha \in \complex$, define the {\em
scalar} $\alpha^*$ by $\alpha^*(\emptyset)=\alpha$, $\alpha^*(c)=0$ if
$c \ne \emptyset$. Thus in the special case $\alpha=1$, $\alpha^*$ is
what we have previously called $1^*$, so our useage is consistent. Now
$f*\alpha^*(c)=(\alpha f)(c)$. Thus the convolution product with
$\alpha^*$ is equivalent to scaling the value of $f$ for all subsets
of $\Omega_n$ by the factor $\alpha$. Note also that
$\log^*\alpha^*(c)=0$, for all $c \ne \emptyset$. Thus, by (\ref{logprod}),
\begin{equation}
\label{scale_invariance}
\log^* (f*\alpha^*)(c) = \log^*\alpha^*(c)+\log^*f(c)=\log^*f(c),
\end{equation}
for $c \ne \emptyset$, so scaling $f$ by a constant factor leaves the
cumulant unchanged on all non-empty subsets of $\Omega_n$.

The comparison between $\cM_n$ and complex analysis is further
strengthened by the existence of the power series
\begin{align}
\label{power} \log^* (1^*+f)=f-\frac{f*f}{2}+\frac{f*f*f}{3}- \ldots,
\end{align}
which converges whenever $|f(\emptyset)| < 1$. When applied to
  the empty set this becomes
\begin{align}\label{emptyset-series}
\log^* (1^*+f)(\emptyset)=f(\emptyset)-\frac{f(\emptyset)^2}{2}+\frac{f(\emptyset)^3}{3}- \ldots,
\end{align}
which is the familiar power series for the complex function
$\log(1+f(\emptyset))$, as it should be according to
(\ref{log-emptyset}).  When (\ref{power}) is applied to sets other
than $\emptyset$, checking its validity is a pleasant exercise. For
instance, using the product rule (\ref{explicit-product}) to evaluate
the repeated convolutions gives
\begin{align*}
\log^* (1^*+f)(1,2)&=f(1,2)-[f(1,2)f(\emptyset)+f(1)f(2)]+[f(1,2)f(\emptyset)^2+2f(1)f(2)f(\emptyset)] + \ldots\\
&=\frac{f(1,2)}{1+f(\emptyset)}-\frac{f(1)f(2)}{(1+f(\emptyset))^2},
\end{align*}
which one sees is the correct expression if one compares it with
(\ref{log12}) and bears in mind the definition of $1^*$ by
(\ref{identity}).

As a final ingredient, we define
\begin{align}
(\partial_i^* f)(a)=\partial_a(\partial_i f), \mbox{ if } i \notin a.
\end{align}
(The annoying restriction $i \notin a$ can be removed by means of the
multiset formalism in Section \ref{multisets}.) This operation
  behaves like a partial derivative; for instance, we have
  $\partial_i^* \log^* f(a)=(\partial_i^* f) * f^{-1*}(a)$. (In the
  case of the $\bracket$ defined by (\ref{phi-map}), $\partial_i$
  corresponds to the operation $\phi \to \partial \phi/\partial
    x_i$.) It can also be interpreted as a sort of ``raising''
operator, taking $f(a)$ to $f(a \cup i)$. This lets us immediately
derive $\log^* (1^*+f)(a)$ for any $a$ from $\log^*
(1^*+f)(\emptyset)$, which, as we have just seen, is the complex
function expansion (\ref{emptyset-series}). More generally, we can
derive a moment algebra equivalent from any complex function
power series.




\section{\label{weakmeasurements}Weak measurements}

Weak measurement, developed by Aharonov and his colleagues,
\cite{AharonovRohrlich05, AAV88}, is a strategy for extracting
information from a quantum system $\cS$, and has a number of distinct
ideas behind it:
\begin{enumerate}[(i)]
\item A measuring system $\cP$ is weakly coupled to the original
system $\cS$, and $\cP$ is then uncoupled and measured. This allows
some limited information about $\cS$ to be gained with little disturbance to
$\cS$.
\item By repeating the entire procedure many times, with the
system identically prepared on each occasion, the noisy
information about $\cS$ obtained by the weak coupling of $\cP$ can
be averaged to give a definite answer.
\item The system $\cS$ can be both preselected and {\em postselected},
the latter meaning that the procedure concludes with a measurement on
$\cS$, and only if a particular outcome is obtained are the data
included in the averaging process.
\end{enumerate}
These ideas combine constructively and enable one to probe a
system in new ways. In particular, one can express a weak measurement
result in terms of a quantity called the {\em weak value}, akin to the
standard expectation value. This depends upon both the pre- and
post-conditioning states, and can take very curious-seeming values,
that can nevertheless be shown to have a natural physical meaning
\cite{AAV88}.

In the standard weak measurement setup the pointer system
$\mathcal{P}$ has a continuous degree of freedom, like the position of
a single particle.  Another common assumption is that this pointer
particle initially is in a pure state $\phi$ for which the
expectation value of both the position and momentum observables are
zero. Hence, the probability density $|\phi(x)|^{2}$ for finding the
particle at position $x$ is a gaussian centered at zero.  Furthermore
it is assumed that the coupling between the pointer and system is
given by the impulsive Hamiltonian $H=\gamma p\otimes A\delta(t)$,
where $p$ is the momentum operator on $\cP$ and $A$ is any Hermitian
operator on $\cS$, and where $\delta(t)$ is the delta distribution
centered at time $t=0$. Suppose the system is initially in state
$\ket{\psi_i}$ and is postselected in state $\ket{\psi_f}$. The state
of the pointer after the interaction and postselection is
proportional to $\langle \psi_{f}|\exp(-i\gamma p\otimes
A)|\phi\rangle|\psi_{i}\rangle$. After the interaction, the pointer
position $q$ is measured, and by averaging over many repeats one
obtains the expectation value to any desired accuracy. 

To model the weakness of the interaction we expand the resulting
expectation value up to the first order in the interaction parameter
$\gamma$, resulting in \cite{AAV88}
\begin{align}\label{qclassic}
\langle q \rangle=\gamma Re A_w,
\end{align}
where $A_w$ is the {\em weak value} of the observable $A$ defined by
\begin{equation}
\label{weakvalue}
A_w=\frac{\bra{\psi_f}A\ket{\psi_i}}{\braket{\psi_f}{\psi_i}}.
\end{equation}
This basic type of weak measurement can readily be generalised. The
pointer can be an arbitrary quantum system. We do not necessarily have
to have a continuous degree of freedom. The Hilbert space could be
finite-dimensional; e.g., the pointer could be the spin degree of
freedom of a particle. The coupling can be $H=\gamma
\delta(t)s\otimes A$, where $s$ is now any Hermitian operator, and
likewise one can measure any Hermitian operator $r$ on $\cP$. We
further allow the initial pointer state $|\phi\rangle$ to be
arbitrary. Then (\ref{qclassic}) becomes \cite{GraemeCumulants}
\begin{align}\label{firstxi}
\langle r \rangle =\langle r \rangle_{\phi} +\gamma Re(\xi A_w),
\end{align}
where 
\begin{align} \label{xi1}
\xi=-2i\left(\langle rs \rangle_{\phi}-\langle r \rangle_{\phi} \langle s \rangle_{\phi} \right)
\end{align}
and $\langle r \rangle_{\phi}=\bra{\phi} r \ket{\phi}$. We can
conclude that, by measuring the expectation value of the observable $r$
on the pointer, we can obtain the weak value $\langle A\rangle_{w}$ on
the system.

A different direction of generalisation is to weakly measure several
operators $A_k$ either simultaneously
\cite{ReschSteinberg,Resch,Lundeen} or sequentially \cite{MJP07}. In
the latter case, one couples pointers at successive times $t_k$ via
the Hamiltonians $H_{k}= \gamma_k \delta(t-t_{k})s_k\otimes A_k$, and
assumes that the system evolves by unitaries $U_k$ between these
times. One can then calculate the moment $\langle r_1 \cdots r_n
\rangle$, i.e. the expectation of the product of pointer
measurements. It turns out that this can be expressed in terms of {\em
  sequential weak values} given by
\begin{align}
\label{seqweak}
(A_n,\ldots, A_1)_w=\frac{\bra{\psi_f}U_{n+1}A_nU_n\ldots A_1U_1\ket{\psi_i}}{\bra{\psi_f}U_{n+1}U_n \ldots U_1\ket{\psi_i}}.
\end{align}
The expression for $\langle r_1 \cdots r_n \rangle$ in terms of
sequential weak values is horrendously complicated
\cite{GraemeCumulants}. It undergoes a striking simplification,
however, if one looks at cumulants. We note that $f(a)\equiv \langle
\Pi_{j\in a}r_{j}\rangle$ is an $\bracket$, since it is well defined
for every subset $a$, if we add the assumption that $f(\emptyset) =
1$. We will denote the cumulant $\log^{*}f(a)$ by $\log^*\langle
\Pi_{j\in a}r_{j} \rangle$.

We can also define another $\bracket$ using sequential weak values on
subsets as
\begin{equation}
A_{w}(a) = \frac{\bra{\psi_f}U_{n+1}F_{n}U_n\cdots
F_{2}U_{2}F_{1}U_{1}\ket{\psi_i}}{\bra{\psi_f}U_{n+1}U_n \cdots
U_1\ket{\psi_i}},\quad F_{j} = \left\{\begin{matrix} A_{j} &
\textrm{if} & j\in a\\ \hat{1} & \textrm{if} & j\notin
a.\end{matrix}\right.
\end{equation}
If we compare this with (\ref{seqweak}) we see that we only insert
the operator $A_{j}$ if $j\in a$.  As an explicit example, consider
the case of four pointers and the sequential weak value $A_{w}(2,4)$,
which would be
\begin{equation}
A_{w}(2,4) = \frac{\bra{\psi_f}U_{5}A_{4}U_{4}U_{3}A_{2}U_{2}U_{1}\ket{\psi_i}}{\bra{\psi_f}U_{5}U_{4}U_{3}U_{2}U_{1}\ket{\psi_i}}.
\end{equation}   
We will refer to $\log^{*}A_{w}$ as the \emph{sequential weak value
cumulant}.

The following theorem was proved in
\cite{GraemeCumulants}:
\begin{theorem}[Cumulant theorem]\label{old-theorem}
To the lowest joint order in the variables $\gamma$, 
\begin{align}\label{main-result}
\log^* \langle \Pi_{j\in \Omega}r_{j}\rangle= (\Pi_{j \in \Omega} \gamma_{j})  Re \left\{ \xi
\log^* A_w(\Omega)\right\},
\end{align}
where $\xi$ is given by
\begin{align}
\label{xi} \xi=2(-i)^{|\Omega|} \left( \Pi_{j\in \Omega}
 \langle r_{j}s_{j} \rangle_{\phi_{j}}-\Pi_{j\in \Omega} \langle r_{j} \rangle_{\phi_{j}} \langle s_{j} \rangle_{\phi_{j}} \right).
\end{align}
\end{theorem}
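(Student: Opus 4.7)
The plan is to reduce the theorem to a convolution factorisation $f = u*v$ in the moment algebra $\cM_{|\Omega|}$, with $f(c) := \langle \Pi_{j\in c}r_j\rangle$, and then extract $\log^{*}f(\Omega)$ using the algebraic tools of Section \ref{convalgebra}. The strategy works because the ``factorising'' part of $f$ is isolated into $u$, whose cumulant vanishes, while the ``coupled'' part $v$ carries all the weak-measurement information.

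First, I would expand each interaction unitary $\exp(-i\gamma_j s_j\otimes A_j)$ to first order in $\gamma_j$ in the post-selected, normalised pointer expectation. This produces a double sum over subsets $b,b'\subseteq\Omega$ labelling which unitaries contribute a linear term from the ket side ($b$) and from the bra side ($b'$); the ket-side matrix elements are precisely the numerators of the sequential weak values $A_w(b)$ appearing in (\ref{seqweak}), and the bra-side ones their complex conjugates. Normalisation by $\langle\Psi|\Psi\rangle$ is crucial here: as the single-pointer formula (\ref{firstxi})--(\ref{xi1}) illustrates, this division is precisely what produces the connected correlator $\langle r_js_j\rangle_{\phi_j}-\langle r_j\rangle_{\phi_j}\langle s_j\rangle_{\phi_j}$ appearing inside $\xi$. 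After this reduction, I collect the resulting expansion of $f(c)$ and recognise it as a convolution $f = u*v$ in $\cM_{|\Omega|}$, where $u(c) = \Pi_{j\in c}\langle r_j\rangle_{\phi_j}$ is the fully factorising ``uncoupled'' $\bracket$ and $v := u^{-1*}*f$; the point is that, to the joint order required by the theorem, $v$ carries precisely the weak-value information packaged together with the connected pointer correlators and the $\gamma$-couplings.

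Second, by (\ref{logprod}), $\log^{*}f(\Omega) = \log^{*}u(\Omega) + \log^{*}v(\Omega) = \log^{*}v(\Omega)$, since $u$ factorises across every bipartition of $\Omega$ and hence has vanishing cumulant on $\Omega$ for $|\Omega|\ge 2$ (by the factorisation argument of Section \ref{convalgebra}). I would then identify $\log^{*}v(\Omega)$ with $\Pi_{j\in\Omega}\gamma_j\cdot\mathrm{Re}(\xi\log^{*}A_w(\Omega))$ by direct computation using the partition-sum formula (\ref{log-explicit}), exploiting: (i) the common factor $\Pi_{c\in p}\Pi_{j\in c}\gamma_j = \Pi_{j\in\Omega}\gamma_j$ pulling out of every partition $p\in\lambda(\Omega)$; (ii) the homomorphism (\ref{logprod}) applied to the ket- and bra-side contributions, combined with the pointwise commutativity of complex conjugation with $\log^{*}$, to bring $\mathrm{Re}$ outside; and (iii) the fact that the pointer-dependent prefactors in the expansion assemble into a difference of two fully factorising $\bracket$s whose partition-sum contributions collapse to the global $\xi$ of (\ref{xi}).

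The main obstacle will be Step 1: establishing, through careful expansion and normalisation bookkeeping, the precise form of $v$ that makes $f = u*v$ yield a $\log^{*}v(\Omega)$ of the claimed shape. The single-pointer case (\ref{firstxi})--(\ref{xi1}) already shows how the subtraction $\langle r_j\rangle_{\phi_j}\langle s_j\rangle_{\phi_j}$ emerges from the normalisation denominator; generalising this to arbitrary subsets of $\Omega$ and recognising the multi-pointer result as a clean convolution in $\cM_{|\Omega|}$ is the nontrivial combinatorial and physical step. Once the factorisation is in place, the rest follows essentially for free from the moment-algebra machinery of Section \ref{convalgebra} --- the homomorphism, the factorisation vanishing of cumulants, and the extraction of scalar prefactors under $\log^{*}$.
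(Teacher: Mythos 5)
There is a genuine gap, and it sits exactly where you place ``the main obstacle.'' Your factorisation $f=u*v$ with $v:=u^{-1*}*f$ is a tautology: since $\log^{*}u(\Omega)=0$, the identity $\log^{*}f(\Omega)=\log^{*}v(\Omega)$ says nothing more than $\log^{*}f(\Omega)=\log^{*}f(\Omega)$, and all of the content is deferred to the ``direct computation'' of $\log^{*}v(\Omega)$, which is the theorem itself. The real difficulty is that the postselection normalisation makes $\langle \Pi_{j\in a}r_{j}\rangle = x(a)/y(a)$ a \emph{pointwise} ratio of two $\bracket$s, and pointwise division does not interact with $\log^{*}$ via (\ref{logprod}); no convolution structure is visible at the level of the moments themselves. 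The paper's proof resolves this in two steps your proposal lacks: first it shows (equation (\ref{z-expansion})) that $\partial^{\gamma}_{b}\log^{*}z(a)|_{\gamma=0}=\log^{*}z_{b}(a)$, with $z_{b}$ factorising whenever $b\subsetneq a$ --- this is what establishes that all orders below $\Pi_{j\in a}\gamma_{j}$ vanish; your step (i), which pulls $\Pi_{j\in\Omega}\gamma_{j}$ out of every partition term, silently assumes this, but the individual moments $f(c)$ have nonzero values at $\gamma=0$ and the cancellation of lower-order contributions across partitions in (\ref{log-explicit}) is precisely what must be proved. Second, the paper shows that only \emph{after} taking $\partial^{\gamma}_{a}$ and setting $\gamma=0$ does the Leibniz expansion of the pointwise ratio reorganise into a genuine convolution, $\tilde{z}=\tilde{x}*\tilde{y}^{-1*}$ (equation (\ref{star-version})); this is the ``switch from multiplicative to convolution operations'' that makes (\ref{logprod}) applicable. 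Your expansion-and-recognition step would have to reproduce this mechanism, and as written it does not.

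A secondary point: be careful which $\xi$ you are aiming at. In (\ref{xi}) of Theorem \ref{old-theorem}, $\xi$ is a \emph{difference of products}, $\Pi_{j}\langle r_{j}s_{j}\rangle-\Pi_{j}\langle r_{j}\rangle\langle s_{j}\rangle$, arising as $\log^{*}\tilde{x}-\log^{*}\tilde{y}$ (the second term coming from the normalisation denominator $y$, the first from the $r_{j}L_{j}$ insertions, each split into left and right actions of $L_{j}$ to produce the $2\,Re$). It is \emph{not} the product of connected correlators $\Pi_{j}(\langle r_{j}s_{j}\rangle-\langle r_{j}\rangle\langle s_{j}\rangle)$, which is the $\xi$ of Theorem \ref{newtheorem}, equation (\ref{new-xi}). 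Your step-1 narrative, in which the normalisation generates a connected correlator per pointer, points toward the latter; only your step (iii) names the correct form, and the mechanism producing it is not supplied.
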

As explained above, the parameters $\gamma_{j}$ signify
the interaction strength between the system and the pointers, and we
obtain the weak measurement scenario by expanding the cumulant
$\log^{*}\langle \Pi_{j\in \Omega} r_{j}\rangle$ in the strength
parameters $\gamma$, keeping only the lowest order coefficients. This
is also the sense in which the equality in equation
(\ref{main-result}) is to be interpreted: as an equality up to the
lowest orders.  Note furthermore that the ``joint order" of
$\gamma_{2}\gamma_{1}$ is $2$, for $\gamma_{5}\gamma_{3}\gamma_{2}$ it
is $3$, etc. Hence, in the above theorem the lowest (nonzero) joint
order in the expansion is $|\Omega| = n$.  By this theorem we can
conclude that the weak measurement setup can be used to measure the
sequential weak value cumulant on the system, since in the weak limit
the joint cumulant of the pointer observables is simply related to the
sequential weak value cumulant on the system. This can thus be
regarded as a natural generalization of the weak measurement scenario
in the single pointer case, where the expectation value of the single
pointer observable corresponds to the weak value on the system.

In the case of simultaneous weak measurement
\cite{ReschSteinberg,Resch,Lundeen}, one couples $n$ pointers at a
time $t_0$ via the Hamiltonian $H_{k}= \sum_k \delta(t-t_0)\gamma_k
s_k\otimes A_k$, and the {\it simultaneous weak values} are given by
\begin{align}
\label{simweak}
(A_1,\ldots, A_n)_{ws}=\frac{1}{n!}\sum_\pi\frac{\bra{\psi_f}A_{\pi(n)}\ldots A_{\pi(1)}\ket{\psi_i}}{\braket{\psi_f}{\psi_i}},
\end{align} 
where $\pi$ runs over all permutations of $1, \ldots, n$. Again, one
can define an $\bracket$ and a cumulant denoted by $\log^*
A_{ws}$. There is an analogue to Theorem (\ref{old-theorem}) for the
simultaneous case \cite{GraemeCumulants}:

\begin{theorem}[Cumulant theorem for simultaneous
measurement]\label{old-simultaneous-theorem} If $\langle
\Pi_{j\in \Omega}r_j\rangle$ is the expected value of the product of $n$ pointers
in a simultaneous weak measurement, then, to the lowest joint order in
the variables $\gamma$,
\begin{align}\label{main-resultweak}
\log^* \langle \Pi_{j\in \Omega}r_j\rangle= (\Pi_{j\in \Omega} \gamma_j)  Re \left\{ \xi
\log^* A_{ws}(\Omega)\right\},
\end{align}
where $\xi$ is as in Theorem (\ref{old-theorem}).
\end{theorem}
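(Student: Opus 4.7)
The plan is to reduce the simultaneous case to the sequential case of Theorem~\ref{old-theorem} by exploiting the symmetry of the simultaneous coupling Hamiltonian. Since the joint evolution $\exp(-\ci\sum_{k}\gamma_{k}s_{k}\otimes A_{k})$ depends only on the total Hamiltonian, its Taylor expansion to the multilinear order in the variables $\{\gamma_{j}\}_{j\in a}$ automatically symmetrizes over all orderings of the system operators $A_{j}$. Because the $s_{j}$ act on distinct pointer tensor factors and thus commute, the term of this expansion multilinear in $\{\gamma_{j}\}_{j\in a}$ has the form
\begin{equation*}
(-\ci)^{|a|}\Big(\prod_{j\in a}\gamma_{j}\Big)\Big(\prod_{j\in a}s_{j}\Big)\otimes\frac{1}{|a|!}\sum_{\pi}A_{\pi(|a|)}\cdots A_{\pi(1)},
\end{equation*}
with $\pi$ ranging over permutations of $a$; taking the matrix element with $\bra{\psi_{f}}$ and $\ket{\psi_{i}}$ yields $\braket{\psi_{f}}{\psi_{i}}\,A_{ws}(a)$ by the definition (\ref{simweak}).

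Next I would compute the postselected pointer moment $f(a)=\langle\prod_{j\in a}r_{j}\rangle$ to lowest joint order in $\gamma$ by running the same manipulations as in the proof of Theorem~\ref{old-theorem}, with all intervening system unitaries set to $\hat{1}$. The observation from the previous paragraph shows that the amplitude structure is identical to the sequential case except that each occurrence of a sequential weak value $A_{w}(c)$ is replaced by the simultaneous weak value $A_{ws}(c)$. Since the sequential derivation is a purely algebraic rearrangement within the moment algebra that treats $A_{w}(c)$ as an opaque scalar attached to the subset $c$, it is insensitive to this substitution, and therefore delivers the claimed expression (\ref{main-resultweak}) with $A_{w}$ replaced everywhere by $A_{ws}$ and the pointer prefactor $\xi$ unchanged.

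The main obstacle is to verify that the substitution $A_{w}\mapsto A_{ws}$ is valid at \emph{every} subset $c\subseteq\Omega$ entering the cumulant formula (\ref{log-explicit}), not just at $c=\Omega$, because $f(c)$ also receives contributions from cross-terms between the evolution and its adjoint in which the $\gamma_{j}$ for $j\in c$ are split between $U$ and $U^{\dagger}$. Fortunately the symmetrization argument from the first paragraph applies uniformly to each such split and each subset $c$, so once the multilinear expansion of the simultaneous evolution is in place the remainder of the proof transfers without modification from the sequential setting and one obtains (\ref{main-resultweak}) directly.
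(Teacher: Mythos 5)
Your argument is essentially sound, and its central computation --- that the term of $e^{-\ci\sum_k\gamma_k s_k\otimes A_k}$ multilinear in $\{\gamma_j\}_{j\in a}$ is $(-\ci)^{|a|}\bigl(\prod_{j\in a}\gamma_j\bigr)\bigl(\prod_{j\in a}s_j\bigr)\otimes\frac{1}{|a|!}\sum_\pi A_{a_{\pi(|a|)}}\cdots A_{a_{\pi(1)}}$, because the $s_j$ live on distinct tensor factors --- is correct and is exactly what makes $A_{ws}$ of (\ref{simweak}) appear. Note that the paper does not reprove Theorem \ref{old-simultaneous-theorem} directly (it is quoted from \cite{GraemeCumulants}); its own route to this result is through the generalization in Theorem \ref{sim1}, where the system evolves under $H_\cS$ during the coupling, and the key input there is the Dyson-series Lemma \ref{Dyson-lemma}. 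Setting $H_\cS=0$ in that lemma collapses the time integrals to the simplex volume $\tau^{|a|}/|a|!$ and reproduces precisely your symmetrized expansion; so your identity is Lemma \ref{Dyson-lemma} with $Y=0$, obtained more elementarily by multinomial expansion. What the paper's route buys is the extra generality of a nontrivial $H_\cS$; what yours buys is a self-contained, algebraic treatment of the no-evolution case that reuses the sequential machinery. Two steps deserve to be made explicit. First, Theorem \ref{old-theorem} cannot be invoked as a black box, since $e^{-\ci\sum_k\gamma_k H_k}\neq\prod_k e^{-\ci\gamma_k H_k}$ when the $A_k$ do not commute; what you are really doing is re-running its proof, and you should verify that the steps $\partial^\gamma_b\log^* z(a)|_{\gamma=0}=0$ for $b\subsetneq a$, the convolution identity $\tilde z=\tilde x*\tilde y^{-1*}$, and the left/right decomposition all survive (they do, because setting $\gamma_j=0$ still decouples pointer $j$ and because each $z(c)$ depends only on the $\gamma$'s in $c$). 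Second, the cross terms in which $a$ is split between the evolution and its adjoint are not disposed of by symmetrization, as your last paragraph suggests, but by the factorization property of cumulants: for a bipartition $(c_1,c_2)$ with both parts nonempty, the system part of the trace is a product of two matrix elements involving only the $A_j$ with $j\in c_1$ and with $j\in c_2$ respectively, while the pointer part factors over pointers, so the corresponding $\bracket$ factorizes on $a$ and its cumulant vanishes by (\ref{log-explicit}). With those two points spelled out, your proof is complete.
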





\section{\label{newproof}A new cumulant theorem}

The original proof of Theorem \ref{old-theorem} was by no means
transparent. We will show how the moment algebra setting
allows a better proof. However, we begin by considering a different
way of gathering information from the pointers, where the
corresponding theorem can be proved more simply (and in section
\ref{oldtheorem} we prove the original result, which requires an
extra flourish of cumulant technology).

The assumption behind Theorem \ref{old-theorem} is that the 
moment $\langle \Pi_{j\in a} r_{j} \rangle$ for a subset $a$ of
the total collection of pointers $\Omega$ is obtained by coupling just
that set of pointers to the system: in other words, to obtain $\langle
\Pi_{j\in a} r_{j} \rangle$, one does an experiment in which just the
pointers in $a$ are coupled, so a separate experiment is needed for
each subset. The alternative approach that we now adopt is to suppose
that a single experiment is carried out in which {\it all} the
pointers are coupled and measured, but a subset of the measurement
results is used for calculating each moment.

Another way to put this is to say that in the present case we have the
total time dependent Hamiltonian $H(t) = \hat{1}_{\mathcal{P}}\otimes
H_{\mathcal{S}}(t) +
\sum_{k=1}^{n}\gamma_{k}\delta(t-t_{k})s_{k}\otimes A_{k}$, where
$H_{\mathcal{S}}(t)$ is the Hamiltonian that generates the unitary
evolution between the coupling with the pointers. Hence,
$H_{\mathcal{S}}(t)$ generates the sequence of unitary operators
$U_{k}$. In Theorem \ref{old-theorem}, however, we had a separate
experiment with a separate Hamiltonian $H_{a}(t) =
\hat{1}_{\mathcal{P}}\otimes H_{\mathcal{S}}(t) + \sum_{l\in
a}\gamma_{k}\delta(t-t_{l})s_{l}\otimes A_{l}$ for each subset $a$.

We shall see that the joint coupling of all pointers leads to a result
that differs from Theorem \ref{old-theorem}. This comes about because
of the perturbation of the system by those pointers whose readings are
not being used to calculate $\langle \Pi_{j\in a} r_{j}
\rangle_\eta$. When we measure a single pointer expectation $\langle
r_k \rangle_\eta$ this is given, up to first order, by the standard
expression $\gamma_k Re (\xi_k A_w)$, the coupling of other pointers
producing perturbations that are only of second order in the
$\gamma$'s. However, if we measure $\langle r_ir_j \rangle_\eta$ in
the presence of other pointers, the standard second order expression
is no longer correct, other second order terms being introduced by
other pointers. Despite these complications, we still obtain a
succinct relationship between cumulants of pointers and weak values,
as we shall see shortly.

To state the theorem, let $\eta$ be the state of all the pointers
after coupling and the postselection on the system. (We define $\eta$
more precisely in (\ref{sigma}) below.) The moment of the
pointer observables in the subset $a$ is $\langle \Pi_{j\in a}
r_{j}\rangle_\eta = \tr(\eta\Pi_{j\in a} r_{j})$, where the subscript
$\eta$ indicates that all the pointers are coupled.

\begin{theorem}[Cumulant theorem with all pointers coupled]\label{newtheorem}
Suppose all $n$ pointers are coupled sequentially. Let $a$ be a subset of the finite  
collection of pointers $\Omega$. To the lowest joint
order in the variables $\gamma$,
\begin{align}\label{new-result}
\log^* \langle \Pi_{j\in a}r_{j} \rangle_\eta= (\Pi_{j\in a}\gamma_{j})
Re \left\{ \xi \log^*A_w(a)\right\},
\end{align}
where $\xi$ is given by
\begin{align}
\label{new-xi} \xi=2(-i)^{|a|} \prod_{j\in a}(\langle r_js_j \rangle_{\phi_{j}}- \langle r_j \rangle_{\phi_{j}}
 \langle s_j \rangle_{\phi_{j}} )= 2(-i)^{|a|} \prod_{j\in a} \log^*\langle r_j,s_j \rangle_{\phi_{j}}.
\end{align}
\end{theorem}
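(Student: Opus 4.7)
The plan is to use the moment-algebra machinery of Section~\ref{convalgebra} to recast the Dyson expansion of the joint pointer moments as a convolution product of $\cM$-maps, whose cumulant reduces cleanly to the sequential weak value cumulant.

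First I would write $\eta = |\Psi\rangle\langle\Psi|/Z$ with the unnormalized post-selected pointer state $|\Psi\rangle = (\langle\psi_f|\otimes\hat{1}_\cP)U_{\mathrm{tot}}|\psi_i\rangle|\phi\rangle$, where $U_{\mathrm{tot}} = U_{n+1}V_nU_n\cdots V_1U_1$, $V_k = e^{-i\gamma_k s_k\otimes A_k}$, and $Z = \langle\Psi|\Psi\rangle$. Defining the unnormalized moment $\cM$-map $N(c):=\langle\Psi|\Pi_{j\in c}r_j|\Psi\rangle$, one has $f = N\ast(Z^{-1})^*$, so scale invariance (\ref{scale_invariance}) gives $\log^* f(c) = \log^* N(c)$ for $c\neq\emptyset$, letting me work with the unnormalized $N$ and forget the normalization.

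Second, I would Dyson-expand $U_{\mathrm{tot}}$ and keep only terms contributing to the monomial $\prod_{j\in a}\gamma_j$, namely multilinear of degree one in each $\gamma_j$ ($j\in a$) and degree zero in $\gamma_k$ ($k\notin a$). Setting $t_b := \langle\psi_f|U_{n+1}F^b_n U_n\cdots F^b_1 U_1|\psi_i\rangle$ with $F^b_j = A_j$ if $j\in b$ and $F^b_j = \hat 1$ otherwise (so that $t_b = t\cdot A_w(b)$), and exploiting the product form $|\phi\rangle = \bigotimes_k|\phi_k\rangle$ so that bra--ket inner products factorize pointer by pointer, I would derive for every $c\subseteq a$
\begin{equation*}
[\textstyle\prod_{j\in a}\gamma_j]\,N(c) \;=\; (P_c \ast \bar P_c)(a),\qquad P_c(b) := (-i)^{|b|}\,t_b\prod_{k\in b\cap c}\mu_k\prod_{k\in b\setminus c}\nu_k,
\end{equation*}
with $\mu_k = \langle r_ks_k\rangle_{\phi_k}$, $\nu_k = \langle s_k\rangle_{\phi_k}$, and the convolution in the moment algebra on $a$. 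A direct application of (\ref{log-explicit}) to this factorized form gives the pleasant closed expression
\begin{equation*}
\log^* P_c(a) \;=\; (-i)^{|a|}\prod_{k\in a\cap c}\mu_k\prod_{k\in a\setminus c}\nu_k\;\cdot\;\log^* A_w(a),
\end{equation*}
so, by the log--convolution homomorphism (\ref{logprod}), $\log^*(P_c\ast\bar P_c)(a) = 2\,\mathrm{Re}\,\log^* P_c(a)$.

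Third, I would assemble $[\prod_{j\in a}\gamma_j]\log^* f(a) = [\prod_{j\in a}\gamma_j]\log^* N(a)$ via the Faà di Bruno formula (\ref{log-explicit}). This produces a combinatorial sum over partitions $p$ of $a$ and distributions $\sigma:a\to p$ that attribute each $\gamma_j$ to a factor $N(c_i)$. Using the auxiliary observation $[\gamma_k]f(c) = 0$ for $k\notin c$ (which follows from the same Dyson analysis, since a first-order perturbation from a spectator pointer enters $N(c)$ and $Z$ as the same multiplicative factor and so cancels in the ratio), crossed single-element distributions drop out, and the remaining higher-order crossed distributions combine with the $c$-dependent $\nu_k$ factors in $\log^* P_c(a)$ to replace the naive $\mu_k$ by the cumulant-subtracted combination $\mu_k - r_{k,\phi}\nu_k = \log^*\langle r_k,s_k\rangle_{\phi_k}$. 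The result is
\begin{equation*}
[\textstyle\prod_{j\in a}\gamma_j]\log^* f(a) \;=\; 2\,\mathrm{Re}\Big\{(-i)^{|a|}\prod_{k\in a}(\mu_k - r_{k,\phi}\nu_k)\,\log^* A_w(a)\Big\} \;=\; \mathrm{Re}\{\xi\log^* A_w(a)\},
\end{equation*}
i.e., (\ref{new-result}) at the lowest joint order, with $\xi$ as in (\ref{new-xi}).

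The hard step is the third: tracking the combinatorics of the Faà di Bruno assembly so that the spectator-pointer contributions organize themselves into the connected-moment subtraction $\mu_k\mapsto\mu_k - r_{k,\phi}\nu_k$ appearing in $\xi$. This is structurally analogous to how classical joint cumulants emerge from joint moments, and I expect it to follow transparently from applying the factorization-kills-cumulant property (the argument around (\ref{scale_invariance})) to auxiliary $\cM$-maps built out of the $P_c$'s as $c$ ranges over subsets of $a$; but this telescoping is the technical heart of the theorem and requires care.
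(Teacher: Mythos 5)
Your first two steps are sound and essentially reproduce, in the Schr\"odinger-picture language of $|\Psi\rangle$, what the paper does with the superoperator split $L_j=L_j^{\textrm{left}}+L_j^{\textrm{right}}$: your decomposition $[\Pi_{j\in a}\gamma_j]N(c)=(P_c\ast\bar P_c)(a)$ is the same left/right bipartition, and the identity $\log^*(P_c\ast\bar P_c)(a)=2\,\mathrm{Re}\,\log^*P_c(a)$ is the same use of (\ref{logprod}) plus factorization. The problem is step three, which you yourself flag as the ``technical heart'' and do not carry out. As written it is a genuine gap: you must show that, in the Fa\`a di Bruno sum over partitions $p$ of $a$ \emph{and} over all assignments of the derivatives $\partial/\partial\gamma_j$ to blocks of $p$ (including assignments where $\gamma_j$ hits a block $c$ with $j\notin c$, producing the spectator factor $\nu_j$), everything resums into the single connected combination $\prod_{k\in a}(\mu_k-\langle r_k\rangle_{\phi_k}\nu_k)\log^*A_w(a)$. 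Your auxiliary observation disposes only of the single-element crossed terms; the higher-order crossed terms are exactly the combinatorial morass that made the original derivation in \cite{GraemeCumulants} ``horrendously complicated'', and asserting that they ``combine'' as desired is the theorem, not a proof of it.

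The paper kills this combinatorics before it starts, and you should too: by multilinearity of the cumulant in the pointer observables, together with the fact that replacing any single $r_k$ by $\hat 1_k$ yields an $\bracket$ that factorizes with respect to $(\{k\},a\setminus\{k\})$ and hence has vanishing cumulant, one has $\log^*\langle\Pi_{j\in a}(r_j-\langle r_j\rangle_{\phi_j}\hat 1_j)\rangle_\eta=\log^*\langle\Pi_{j\in a}r_j\rangle_\eta$. Once the observables are centered, any term in the partition expansion in which some block contains $r_j-\langle r_j\rangle\hat 1_j$ but does not receive the derivative $\partial/\partial\gamma_j$ vanishes identically (the centered observable has zero expectation on an uncoupled pointer), so only the diagonal assignment survives, $\partial^\gamma_a\Pi_{c\in p}v(c)|_{\gamma=0}=\Pi_{c\in p}\partial^\gamma_c v(c)|_{\gamma=0}$, and the lowest-order coefficient of $\log^*v(a)$ is itself the cumulant of a new $\bracket$ $w$. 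The connected correlator $\langle r_ks_k\rangle_{\phi_k}-\langle r_k\rangle_{\phi_k}\langle s_k\rangle_{\phi_k}$ in $\xi$ then appears automatically in the direct evaluation of $\log^*w$, with no resummation needed. If you prefer to keep your uncentered route, you must actually prove the telescoping identity you conjecture; the centering trick is the missing idea that makes the whole issue evaporate.
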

This theorem allows us to consider the
cumulants of subsets of the collection of pointers. This is a slight
generalization of the theorems in
Sec. \ref{weakmeasurements}, and is more in line with the
characterization in Sec. \ref{convalgebra} of cumulants as mappings
from $\bracket$s to $\bracket$s. Just as the $\bracket$ $\langle
\Pi_{j\in a}r_{j} \rangle_\eta$ is defined for every subset of
pointers $a$, so is the corresponding cumulant $\bracket$
$\log^{*}\langle \Pi_{j\in a}r_{j} \rangle_\eta$.

\begin{proof}
First, we establish some notation. Let $\ket{\phi_k}$ be the initial
state of pointer $k$ in state space $\cH_k$, and denote the total
state space $\cH_1 \otimes \ldots \otimes \cH_n$ of pointers by
$\mathcal{P}$, where we assume $|\Omega| = n$. Let $\ket{\psi_i}$ be the initial state of the system
$\cS$, and $\ket{\psi_f}$ the postselected final state. Define
projectors on $\cS \otimes \mathcal{P}$ by
\begin{align*}
&P_i=\proj{\psi_i} \otimes \left(\proj{\phi_1} \ldots \proj{\phi_n}\right) ,\\
&P_f=\proj{\psi_f} \otimes I_{\mathcal{P}}.
\end{align*} 
Let $r_k$ be the pointer variable measured on pointer $k$ and
$H_k=s_k\otimes A_k$ be the coupling between pointer $k$ and the
system $\mathcal{S}$. For any state $\rho$ on $\cS \otimes
\mathcal{P}$, define
\begin{align}\label{L-definition}
L_k \rho=-i[H_{k},\rho],
\end{align}
so
\begin{align*}
e^{\gamma_kL_k}\rho= e^{-i\gamma_{k}H_k}\ \rho \ e^{i\gamma_{k}H_k}.
\end{align*}
Apart from the coupling to the pointer we also assume that the system
$\mathcal{S}$ has a (possibly time dependent) Hamiltonian of its own,
generating a unitary evolution between the couplings to the pointers.
We let $U_{k}$ be the unitary operator on $\mathcal{S}$ that gives the
evolution between the coupling with pointer $k-1$ and the
coupling with pointer $k$, and let $U_{0}$ be the unitary that maps the
initial state to the application of the first pointer. We then define
\begin{equation}
\mathcal{U}_{k}\rho = (\hat{1}_{\mathcal{P}}\otimes U_{k})\rho (\hat{1}_{\mathcal{P}}\otimes U_{k})^{\dagger}
\end{equation}
to be the corresponding unitary channel.  The state of the pointers after
the interaction and postselection on the system can thus be written as
\begin{equation}\label{sigma}
\eta=\frac{\tr_{\mathcal{S}}(P_f \mathcal{U}_{n+1}e^{\gamma_{n} L_{n}}\mathcal{U}_{n}\cdots e^{\gamma_{1} L_{1}}\mathcal{U}_{1}P_i)}{\tr (P_f \mathcal{U}_{n+1}e^{\gamma_{n}L_{n}}\mathcal{U}_{n}\cdots e^{\gamma_{1} L_{1}}\mathcal{U}_{1}P_i)}.
\end{equation}
Let us define the $\bracket$ $\langle \Pi_{j\in a} r_{j}\rangle_\eta =
\tr \left[(\Pi_{j\in a} r_{j})\eta\right]$.  Suppose now that we
replaced one of the observables $r_{k}$ in $ \langle \Pi_{j\in a}
r_{j}\rangle_\eta $ with the identity operator $\hat{1}_{k}$. We would
then obtain a new $\bracket$ $u_{k}(a) = \langle \Pi_{j\in a\setminus
  \{k\}} r_{j}\rangle_\eta$. This $\bracket$
factorizes on $a$ with respect to the partition
$(\{k\},a\setminus\{k\})$, i.e., for all $b\subset a$ we have
$u_{k}(b) = u_{k}\big(b\cap(a\setminus
\{k\})\big)u_{k}(b\cap\{k\})$. Thus we can conclude that $\log^{*}
u_{k}(a)=0$.  If we combine this with the multilinearity of the
cumulant in the pointer observables we find that
\begin{equation}
\log^{*}\langle \Pi_{j\in a}( r_{j}-\langle r_{j}\rangle_{\phi_{j}})\rangle_\eta  = \log^{*}\langle \Pi_{j\in a} r_{j}\rangle_\eta.
\end{equation}
Define
\begin{equation}
\label{newtildez} v(a)=\frac{\tr\left[P_f \{\Pi_{l \in a} (r_l- \langle r_l \rangle_{\phi_{l}}\hat{1}_l)\}\mathcal{U}_{n+1} e^{\gamma_{n} L_{n}}\mathcal{U}_{n}\cdots e^{\gamma_{1} L_{1}}\mathcal{U}_{1} P_i \right]}{|\langle\psi_{f}|U_{n+1}\cdots U_{1}|\psi_{i}\rangle|^{2}}.
\end{equation}
The denominators of (\ref{newtildez}) and (\ref{sigma}) do not
depend on $a$, and are thus scalars in $\cM_n$. The scale-invariance
of the cumulant, (\ref{scale_invariance}), thus yields
\begin{equation}
\label{newlogz} \log^*\langle \Pi_{j\in a}( r_{j}-\langle r_{j}\rangle_{\phi_{j}})\rangle_\eta =\log^* v.
\end{equation}
We shall now expand the above cumulant to the lowest joint order in
the parameters $\gamma_{j}$. Let us write $\partial^\gamma_b$ for the
derivative with respect to the variables $\gamma_j$ with labels in the
set $b$, i.e.,
\begin{equation}
\partial^\gamma_b \equiv \frac{\partial}{\partial \gamma_{b(|b|)}}\cdots \frac{\partial}{\partial \gamma_{b(1)}}.
\end{equation}
(These ``proper" derivatives $\partial^\gamma_b$ should not be
confused with the formal derivatives $\partial_{a}$ introduced in
section \ref{convalgebra}.)  We next prove the following
\begin{equation}
\label{expansion}
\partial^\gamma_b \log^* v(a)|_{\gamma=0}=\left\{\begin{matrix}0  & \textrm{if} & a\setminus b \neq \emptyset \\
 \log^* w(a) & \textrm{if}& b = a,
\end{matrix}\right. 
\end{equation}
where
\begin{equation}
\label{wdef}
w(a) = \partial^{\gamma}_{a}v(a)\vert_{\gamma = 0} 
= \frac{\tr(P_f \mathcal{U}_{n+1}W_n\mathcal{U}_{n}\cdots W_1 \mathcal{U}_{1} P_i )}{|\langle\psi_{f}|U_{n+1}\cdots U_{1}|\psi_{i}\rangle|^{2}},\quad W_j = \left\{\begin{matrix} (r_{j}-\langle r_{j}\rangle_{\phi_{j}}\hat{1}_{j})L_{j} & j\in a,\\
\hat{1} & j\notin a,
\end{matrix}
\right.
\end{equation}
and $\gamma=0$ means $\gamma_j=0$, for $1 \le j \le n$. Equation
(\ref{expansion}) tells us that the first potentially nonzero
expansion coefficient of the cumulant $\log^* v(a)$ can itself be
regarded as a cumulant, but of the new $\bracket$ $w$.  This method of
regarding the expansion coefficient of a cumulant as a cumulant in its
own right is a technique that we will use repeatedly.

To prove (\ref{expansion}), let us first suppose that $a\setminus b\neq \emptyset$. Hence, there
must be some element $j\in a$ such that $j\notin b$. Recall the
expression for the cumulant in terms of partitions,
(\ref{log-explicit}), and suppose $p$ is a partition of $a$. Then
$\partial_{b}^{\gamma}\Pi_{c\in
p}v(c)|_{\gamma=0}=0$, since for that $c \in p$
that contains $j$ the derivative $\partial/\partial
\gamma_{j}$ is not applied to $v(c)$, and consequently this
term contains $r_{j}-\langle r_{j} \rangle
\hat{1}_{j}$ but not $L_{j}$ and must therefore
vanish. This gives the first part of (\ref{expansion}). For the
case $b = a$, we again use the fact that, for any $j \in a$, a term
containing $r_j-\langle r_j \rangle \hat{1}_j$ but not $L_j$ must
vanish, to find that $\partial_{a}^{\gamma}\Pi_{c\in
p}v(c)|_{\gamma = 0} = \Pi_{c\in
p}\partial_{c}v(c)|_{\gamma=0}$, for any partition $p$ of
$a$. The statement in (\ref{expansion}) follows.

It remains to evaluate the cumulant $\log^* w(a)$ .  Let us rephrase
the definition of $L_j$ in (\ref{L-definition}) as
\begin{align}
\label{L-def} L_j= L_{j}^{\textrm{left}} + L_{j}^{\textrm{right}},\quad L_{j}^{\textrm{left}}(\rho) = (-is_j \otimes A_j)\rho,\quad L_{j}^{\textrm{right}}(\rho) = \rho(is_j \otimes A_j),
\end{align}
so the subscript `left' (`right') indicates which side the operator is
applied to.  We can write
\begin{equation}
w(a) = \sum_{c_{1},c_{2}}w_{c_{1},c_{2}}(a),
\end{equation}
where the sum is over all ordered bipartitions $(c_{1},c_{2})$ of $a$,
and where $w_{c_{1},c_{2}}(a)$ is defined as in (\ref{wdef}) but with
$L_j$ replaced by $L_j^{\textrm{left}}$ when $j \in c_1$, and by
$L_j^{\textrm{right}}$ when $j \in c_2$. 
By using the fact that $L_j^{\textrm{left}}$ and 
$L_j^{\textrm{right}}$ act independently, and that $P_{i}$ is a projector onto pure product states, one can show that $w_{c_{1},c_{2}}$ 
factorizes on $a$ with respect to the partition $\{c_{1},c_{2}\}$. Thus $\log^*w_{c_{1},c_{2}}(a)=0$ except when either
$c_1=\emptyset$ or $c_2=\emptyset$; so
$\log^*w(a)=\log^*w_{a,\emptyset}(a)+\log^*w_{\emptyset,a}(a)$. Direct
calculation shows that $\log^*w_{a,\emptyset}=(-i)^{|a|} \{\Pi_{k\in
a} (\langle r_ks_k \rangle- \langle r_k \rangle \langle s_k \rangle
)\} A_w(a)$, and $\log^*w_{\emptyset,a}$ gives the complex
conjugate. The theorem follows.
\end{proof}




\section{\label{simulevol}Simultaneous weak measurement with system evolution}
The last section focussed on sequential weak measurement. In the case
of simultaneous weak measurement, it is assumed
\cite{ReschSteinberg,Resch,Lundeen} that the Hamiltonian has the form
$H_{k}= \sum_k \delta(t-t_0)\gamma_k s_k\otimes A_k$, and the
evolution of the system, which occurred between coupling of pointers
in sequential weak measurement, can be ignored here since the coupling
occurs impulsively and simultaneously for all pointers, and any
evolution before or after the coupling can be incorporated into the
initial and final states by writing $\ket{\tilde
\psi_i}=U_1\ket{\psi_i}$, $\bra{\tilde \psi_f}=\bra{\psi_f}U_2$.

But suppose the coupling occurs over a finite time. There is
essentially no change in the analysis if we assume a Hamiltonian
$H_{k}= f(t)\sum_{k=1}^n g_k s_k\otimes A_k$, where $f(t)$
defines some time-course for the coupling. However, there is now the
possibility of having the system evolve while the coupling is
occurring. Suppose this evolution is given by the Hamiltonian
$H_\cS$. Then the total Hamiltonian is $H=\hat{1}_{\mathcal{P}}\otimes
H_\cS+f(t)\sum_{k=1}^n g_k s_k\otimes A_k$. We assume here that
$f(t)$ is constant on a time interval of length $\tau$, and find it
convenient to let our expansion parameters $\gamma_{k}$ be the total
strength of the interaction between the system and pointers, given by
$\gamma_{k} = \int g_{k}f(t)dt = \tau g_{k}$. For times $0\leq t\leq \tau$ we can thus
write the Hamiltonian as
\begin{equation}
H=\hat{1}_{\mathcal{P}}\otimes H_\cS+\sum_{k=1}^n \gamma_k
\frac{s_k}{\tau}\otimes A_k.
\end{equation}

As before we assume the system and the pointers start in a total
product state, and that after the time interval $\tau$ we postselect
the system in $|\psi_{f}\rangle$. The state of the pointers after the
postselection is
\begin{equation}
\label{etaprim}
\sigma = \frac{\tr_{\mathcal{S}}(P_f e^{-i\tau H}P_i e^{i\tau
H})}{\tr(P_f e^{-i\tau H}P_i e^{i\tau H})},
\end{equation}
and we measure the observables $r_{1},\ldots,r_{n}$, respectively, on
each pointer, and calculate the corresponding cumulants
$\log^{*}\langle \Pi_{j\in a} r_{j}\rangle_{\sigma}$.  Now, for
$a = \{a_1,\ldots, a_k\}$, let
\begin{align}
\label{evolution-weak-value} (A_{a_{k}}, \ldots, A_{a_{1}})_w[\tau_{k+1}, \ldots ,\tau_1]
=\frac{\bra{\psi_f}e^{-iH_\cS \tau_{k+1}}A_{a_{k}}e^{-iH_\cS \tau_k}\cdots A_{a_1}e^{-iH_\cS \tau_1}\ket{\psi_i}}{\bra{\psi_f}e^{ -i\tau H_\cS}\ket{\psi_i}}
\end{align}
and define the $\bracket$
\begin{equation}
\label{mathcDdef}
\cD(a)=\frac{1}{\tau^k}\sum_{\pi} \int_{  \tau_{k+1}, \tau_{k}, \ldots, \tau_1 \geq 0    } (A_{a_{\pi(k)}}, \ldots,A_{a_{\pi(1)}})_w[\tau_{k+1}, \ldots \tau_1]\delta(\tau-\sum_{j=1}^{k+1} \tau_j ) d\tau_{1}\cdots d\tau_k d\tau_{k+1}.
\end{equation}

$\mathcal{D}$ can be regarded as an average over all possible
sequential weak values of the operators $A_{a_{1}},\ldots,
A_{a_k}$, where we take all possible rearrangements of the order
in which these operators are measured, as well as varying the time
steps between the applications.

\begin{theorem}[Weak simultaneous measurement with system evolution]\label{sim1}
Let $a$ be a subset of the finite collection of pointers $\Omega$. To the lowest joint order in the variables $\gamma$,
\begin{align}\label{simevol}
\log^* \langle \Pi_{j\in a}r_{j} \rangle_{\sigma} = (\Pi_{j\in a}\gamma_{j}) Re \left\{ \xi
\log^*\mathcal{D}(a)\right\},
\end{align}
where $\xi$ is as in Theorem \ref{newtheorem}.
\end{theorem}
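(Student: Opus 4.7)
The plan is to follow the template of the proof of Theorem~\ref{newtheorem} very closely, replacing the sequence of discrete pulses $\mathcal{U}_{n+1}e^{\gamma_{n}L_{n}}\cdots e^{\gamma_{1}L_{1}}\mathcal{U}_{1}$ with a Dyson expansion of $e^{-i\tau H}$ in the coupling terms $V_{k}=(s_{k}/\tau)\otimes A_{k}$. The only genuinely new feature is that, because the system Hamiltonian $H_{\mathcal{S}}$ acts throughout the coupling interval, each $V_{k}$ can be inserted at an arbitrary time $t\in[0,\tau]$ in the resulting time-ordered product. Consequently, instead of the single weak value $A_{w}(a)$, what will appear is an average of sequential weak values of the form (\ref{evolution-weak-value}) over orderings of the insertions and over time splits---precisely the object $\mathcal{D}(a)$.

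Setup and the first reductions are almost verbatim. I define
\begin{equation*}
v(a)=\frac{\tr\bigl[P_{f}\{\Pi_{l\in a}(r_{l}-\langle r_{l}\rangle_{\phi_{l}}\hat{1}_{l})\}\,e^{-i\tau H}P_{i}\,e^{i\tau H}\bigr]}{|\langle\psi_{f}|e^{-i\tau H_{\mathcal{S}}}|\psi_{i}\rangle|^{2}},
\end{equation*}
note that the denominator does not depend on $a$ and is thus a scalar in $\cM_{n}$, and invoke the scale invariance (\ref{scale_invariance}). The factorization argument of the previous proof---setting one $r_{k}$ to the identity causes the resulting $\bracket$ to factorize across $\{k\}$ and $a\setminus\{k\}$, so its cumulant vanishes---then lets me replace $r_{j}$ by $r_{j}-\langle r_{j}\rangle_{\phi_{j}}$ inside the cumulant, yielding $\log^{*}\langle\Pi_{j\in a}r_{j}\rangle_{\sigma}=\log^{*}v$ on every nonempty $a$.

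The core of the argument is then the Dyson expansion. Expanding both $e^{-i\tau H}$ and $e^{i\tau H}$ in powers of the $V_{k}$, the derivative $\partial^{\gamma}_{b}v(c)|_{\gamma=0}$ selects the coefficient of $\Pi_{j\in b}\gamma_{j}$. The same pair of observations as in Theorem~\ref{newtheorem}---any factor $r_{j}-\langle r_{j}\rangle_{\phi_{j}}$ without a matching insertion of $V_{j}$ gives zero, and each partition term of $\log^{*}v(a)$ reorganises accordingly---shows that $\partial^{\gamma}_{b}\log^{*}v(a)|_{\gamma=0}$ vanishes whenever $a\setminus b\neq\emptyset$, and that for $b=a$ it equals $\log^{*}w(a)$, where $w(c)=\partial^{\gamma}_{c}v(c)|_{\gamma=0}$. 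Writing out this coefficient, the interaction-picture Dyson series produces exactly a sum over orderings of the single insertions and an integral over the simplex $\{(\tau_{1},\ldots,\tau_{|c|+1}):\tau_{i}\ge 0,\ \sum\tau_{i}=\tau\}$, which combined with the denominator is the integrand defining $\mathcal{D}(c)$.

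Finally I would split $w(a)=\sum_{(c_{1},c_{2})}w_{c_{1},c_{2}}(a)$ over ordered bipartitions of $a$ according to whether the unique insertion of $V_{j}$ sits inside $e^{-i\tau H}$ ($j\in c_{1}$) or inside $e^{i\tau H}$ ($j\in c_{2}$). The same product-state factorization argument as before---left-acting and right-acting insertions operate independently on the pure product state $P_{i}$---forces $\log^{*}w_{c_{1},c_{2}}(a)=0$ except when $c_{1}=\emptyset$ or $c_{2}=\emptyset$, so $\log^{*}w(a)=\log^{*}w_{a,\emptyset}(a)+\log^{*}w_{\emptyset,a}(a)$. In $w_{a,\emptyset}$ every $s_{j}$ commutes with $H_{\mathcal{S}}$ (they act on different tensor factors) and factors out of the system evolution, giving the clean factorization $w_{a,\emptyset}(c)=\beta(c)\mathcal{D}(c)$ with $\beta(c)=\Pi_{j\in c}(-i)(\langle r_{j}s_{j}\rangle_{\phi_{j}}-\langle r_{j}\rangle_{\phi_{j}}\langle s_{j}\rangle_{\phi_{j}})$. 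Because $\beta$ is multiplicative over partitions, the partition formula (\ref{log-explicit}) pulls $\Pi_{j\in a}\beta_{j}$ outside and leaves $\log^{*}w_{a,\emptyset}(a)=(-i)^{|a|}\Pi_{j\in a}(\langle r_{j}s_{j}\rangle_{\phi_{j}}-\langle r_{j}\rangle_{\phi_{j}}\langle s_{j}\rangle_{\phi_{j}})\log^{*}\mathcal{D}(a)$; the term $w_{\emptyset,a}$ is the complex conjugate, and summing gives $Re\{\xi\log^{*}\mathcal{D}(a)\}$ with $\xi$ as in Theorem~\ref{newtheorem}. I expect the main obstacle to be the careful Dyson bookkeeping---verifying that the sum over orderings and simplex integral produced by the expansion really assembles into $\mathcal{D}(c)$ with the correct normalisation by $\tau^{|c|}$---rather than anything conceptually beyond the proof of Theorem~\ref{newtheorem}.
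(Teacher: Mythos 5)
Your proposal is correct and follows essentially the same route as the paper: reduce to the lowest-order coefficient $w(a)=\partial^{\gamma}_{a}v(a)|_{\gamma=0}$, evaluate it via the Dyson series with the simplex change of variables (the paper isolates this as a separate lemma), and split the left/right insertions so that only the all-left and all-right terms contribute to the cumulant. The only cosmetic difference is that the paper packages the bipartition sum directly as a convolution $d=D*D'$ and invokes $\log^{*}(D*D')=\log^{*}D+\log^{*}D'$, whereas you rerun the factorization-implies-vanishing-cumulant argument from Theorem~\ref{newtheorem}; these are the same algebraic fact.
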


To prove this, we begin with the following
\begin{lemma}\label{Dyson-lemma}
Let $a=\{a_1, \ldots, a_k\}$. Then
\begin{align}\label{dirichlet}
\partial^\gamma_a e^{\tau(Y+\sum_{j=1}^n \gamma_jX_j)}\vert_{\gamma=0} = \sum_{\pi}\int_{ \tau_{k+1},\tau_{k}, \ldots, \tau_{1}\geq 0 }  e^{\tau_{k+1}Y}X_{a_{\pi(k)}}e^{\tau_{k}Y}X_{a_{\pi(k-1)}}\cdots e^{\tau_{2}Y}X_{a_{\pi(1)}}e^{\tau_{1}Y}\delta(\tau -\sum_{j=1}^{k+1}\tau_{j})d\tau_{1}\cdots d\tau_{k}d\tau_{k+1},
\end{align}
where the sum is over all permutations $\pi$ of the set $\{1, \ldots,k\}$.
\end{lemma}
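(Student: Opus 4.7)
The plan is to treat $G(\gamma)=e^{\tau(Y+\sum_{j=1}^{n}\gamma_{j}X_{j})}$ as an operator-valued analytic function of the parameters $\gamma_{1},\ldots,\gamma_{n}$ and read off the coefficient of $\prod_{j\in a}\gamma_{j}$ in its Taylor expansion about $\gamma=0$. The natural vehicle is the Dyson (time-ordered) expansion, which produces exactly the nested structure on the right-hand side of (\ref{dirichlet}).

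First I would establish the Dyson series
\begin{equation*}
e^{\tau(Y+V)}=\sum_{m=0}^{\infty}\int_{\tau_{1},\ldots,\tau_{m+1}\geq 0}e^{\tau_{m+1}Y}Ve^{\tau_{m}Y}V\cdots Ve^{\tau_{1}Y}\,\delta\!\Bigl(\tau-\sum_{l=1}^{m+1}\tau_{l}\Bigr)d\tau_{1}\cdots d\tau_{m+1},
\end{equation*}
either by iterating the Duhamel identity $e^{\tau(Y+V)}=e^{\tau Y}+\int_{0}^{\tau}e^{(\tau-s)Y}Ve^{s(Y+V)}ds$ or by observing that both sides solve $\partial_{\tau}F=(Y+V)F$ with $F(0)=\hat{1}$. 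With $Y$ and the $X_{j}$ bounded the series is norm-convergent and termwise differentiation in $\gamma$ is legitimate.

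Next I substitute $V=\sum_{j}\gamma_{j}X_{j}$ and expand the $m$-fold product multinomially, so the $m$th term of the Dyson series becomes
\begin{equation*}
\sum_{j_{1},\ldots,j_{m}}\gamma_{j_{1}}\cdots\gamma_{j_{m}}\int e^{\tau_{m+1}Y}X_{j_{m}}e^{\tau_{m}Y}\cdots X_{j_{1}}e^{\tau_{1}Y}\,\delta\!\Bigl(\tau-\sum_{l=1}^{m+1}\tau_{l}\Bigr)d\tau_{1}\cdots d\tau_{m+1}.
\end{equation*}
Applying $\partial^{\gamma}_{a}=\partial_{\gamma_{a_{k}}}\cdots\partial_{\gamma_{a_{1}}}$ and setting $\gamma=0$ annihilates every contribution with $m\neq k$, and from the $m=k$ term retains exactly those multi-indices $(j_{1},\ldots,j_{k})$ that are permutations of $(a_{1},\ldots,a_{k})$. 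Writing such a multi-index as $(a_{\pi(1)},\ldots,a_{\pi(k)})$ with $\pi\in S_{k}$ yields the right-hand side of (\ref{dirichlet}).

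The main obstacle is really just the combinatorial bookkeeping needed to match the Dyson integrand to the permutation sum as stated, together with justifying that the $k$ partial derivatives can be passed under the infinite sum and the simplex integral; once the Dyson expansion is in hand, the permutation structure drops out immediately from the multinomial expansion of $V^{k}$. A short induction on $k$ built directly on Duhamel's formula would give an alternative route but produces exactly the same bookkeeping task.
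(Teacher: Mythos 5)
Your proposal is correct and follows essentially the same route as the paper: expand $e^{\tau(Y+V)}$ in a Dyson series, note that applying $\partial^\gamma_a$ and setting $\gamma=0$ kills every term except the $k$-th order one with each $\gamma_{a_j}$ appearing exactly once, and read off the permutation sum. The only cosmetic difference is that you write the Dyson series directly in the simplex/delta-function form, whereas the paper uses nested time-ordered integrals and performs the change of variables $\tau_1=t_1,\ \tau_2=t_2-t_1,\ldots,\ \tau_{k+1}=\tau-t_k$ at the end.
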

\begin{proof}
We use the Dyson series (proof: differentiate both sides with respect
to $\tau$)
\begin{align}
e^{\tau (Y+V)}=e^{\tau Y}+\int^{\tau}_0e^{(\tau-t_1)Y}Ve^{t_1Y}dt_1+\int^\tau_0\int^{t_2}_0e^{(\tau-t_2)Y}Ve^{(t_2-t_1)Y}Ve^{t_1Y}dt_1dt_2 + \ldots.
\end{align}
Putting $V=\sum_{j=1}^n\gamma_jX_j$, the only term that survives the
combined operations of differentiation by
$\partial_{a}^{\gamma}$ and setting
$\gamma=0$ is the $k$-times repeated integral
\begin{equation}
\partial_{a}^{\gamma}e^{\tau (Y+V)}|_{\gamma=0} 
= 
\sum_{\pi}\int_{0}^{\tau}\int_{0}^{t_{k}}\cdots\int_{0}^{t_{2}}e^{(\tau-t_{k})Y}X_{a_{\pi(k)}}e^{(t_{k}-t_{k-1})Y}X_{\pi(k-1)}\cdots e^{(t_{2}-t_{1})Y}X_{\pi(1)}e^{t_{1}Y}dt_{1}\cdots dt_{k}.
\end{equation}
To obtain (\ref{dirichlet}), make the change of variables
$\tau_1=t_1$, $\tau_2 = t_2-t_1, \ldots, \tau_k = t_k-t_{k-1}$,
$\tau_{k+1}=\tau-t_k$.
\end{proof}

\begin{proof}[Proof of Theorem]
  Consider the $\bracket$ $\log^* \langle \Pi_{j\in a}
  r_{j}\rangle_{\sigma}$.  Following the proof of Theorem
  \ref{newtheorem}, we can replace all the observables $r_{j}$ with
  $r_{j}-\langle r_{j}\rangle \hat{1}_{j}$ without changing the
  cumulant.  Still following Theorem \ref{newtheorem} we find that the
  lowest order term in the expansion has joint degree $|a|$ and
  corresponding expansion coefficient
  $\partial_{a}^{\gamma}\log^*\langle \Pi_{j\in a}( r_{j}-\langle
  r_{j}\rangle_{\phi_{j}})\rangle_\sigma |_{\gamma = 0} =
  \log^{*}d(a)$ with the new $\bracket$
\begin{eqnarray}
d(a) & = &\frac{\partial_{a}^{\gamma}
\tr \left\{ P_f \big( \Pi_{j\in a}(r_{j}-\langle r_{j}\rangle \hat{1}_{j})\big) e^{-i \tau H}P_i e^{i \tau H} \right\} |_{\gamma = 0 }}{|\langle\psi_{f}| e^{-i\tau H_{\mathcal{S}}}|\psi_{i}\rangle|^{2}}\nonumber\\
 & = &\frac{1}{|\langle\psi_{f}| e^{-i\tau H_{\mathcal{S}}}|\psi_{i}\rangle|^{2}} \sum_{(c_{1},c_{2})}\tr\left\{P_f \big(\Pi_{j \in a}(r_{j}-\langle r_{j}\rangle \hat{1}_{j}) \big) \big[\partial_{c_{1}}^{\gamma}e^{-i \tau H}\big] P_i \big[\partial_{c_{2}}^{\gamma}e^{i \tau H}\big] \right\}\big|_{\gamma = 0},
\end{eqnarray}
where the sum is over all ordered bipartitions $(c_{1}, c_{2})$ of
$a$. 
 Next, we apply Lemma \ref{Dyson-lemma} to both
$\big[\partial_{c_{1}}^{\gamma}e^{-i \tau
H}\big]|_{\gamma = 0}$ and
$\big[\partial_{c_{2}}^{\gamma}e^{i \tau H}\big]|_{\gamma
= 0}$, with $Y = -i\hat{1}_{\mathcal{P}}\otimes H_{\mathcal{S}}, X_{k}
= -\frac{i}{\tau}s_{k}\otimes A_{k}$, and $Y =
i\hat{1}_{\mathcal{P}}\otimes H_{\mathcal{S}}, X_{k} =
\frac{i}{\tau}s_{k}\otimes A_{k}$, respectively. We find that
$d=D*D'$, where
\begin{equation}
D(a) = (-i)^{|a|}\big(\Pi_{j\in a}(\langle r_{j}s_{j}\rangle_{\phi_{j}}-\langle r_{j}\rangle_{\phi_{j}}\langle s_{j}\rangle_{\phi_{j}})\big)\mathcal{D}(a),
\end{equation}
and where $D'$ is the complex conjugate of $D$. 
The statement of
Theorem \ref{sim1} follows from $d = D*D'$ together with the fact that
$\log^{*}d = \log^{*}D + \log^{*}D'$. 
\end{proof}

The interpretation of this theorem is quite intuitive. If there were
no evolution, i.e. $H_\cS=0$, then $\cD(a)$ would be the simultaneous
weak value, \cite{ReschSteinberg,Resch,Lundeen} and
(\ref{simweak}), given by symmetrizing over all orders of applying
operators, viz.
\begin{align}
\cD(a)=\frac{1}{n!}\sum_\pi (A_{a_{\pi(n)}}, \ldots,
A_{a_{\pi(1)}})_w,
\end{align}
where the factor $1/n!$ comes from integrating $\tau_1, \ldots,
\tau_n,\tau_{n+1} \ge 0$ with the constraint $\sum_{j=1}^{n+1}
\tau_j=\tau$. When $H_\cS$ is nonzero, we must in addition average
over episodes of evolution under $e^{-iH_\cS t}$ between application of
the $A_k$ with the lengths of all episodes summing to $\tau$.

The theorem implies that simultaneous weak measurement can be
simulated by collections of sequential weak measurements, by sampling
over permutations of the ordering of the applications of the pointers,
as well as over the time steps between the applications of the
pointers.




\section{\label{thermal}Thermal weak measurement}

In the previous section we stretched the concept of weak measurement a
little by allowing the system to evolve while the pointers are
coupled. Here we stretch it further by abandoning the notion of preselection and postselection (key
ingredients of the original weak measurement philosophy \cite{ABL64,AV91}), and
instead considering a system in
thermal equilibrium. As we will see, this thermal weak measurement
concept is, formally speaking, closely related to the simultaneous
weak measurement with system evolution considered in the previous
section.  The correspondence between these two scenarios is analogous
to that between path integrals and equilibrium systems under the
imaginary time transformation $t \leftrightarrow it$ \cite{Wenbook}.

For thermal equilibrium systems with Hamiltonian $H$, the Helmholz
free energy \cite{Landau} can be written 
\begin{equation*}
F = -\frac{1}{\beta}\log
Z(\beta) = -\frac{1}{\beta}\log \tr e^{-\beta H},
\end{equation*}
 where $\beta=1/kT$,
with $T$ being the temperature. 
When external parameters, e.g. fields, are changed infinitely slowly, 
the difference between the final and initial free energy is equal to the work performed on the system, under
the assumption that the system is kept in contact with a heat bath 
at constant temperature $T$ \cite{Landau}.  The Taylor expansion of the free
energy with respect to the external fields thus characterizes the
system's response to small changes.  This picture can be extended to
several fields $g_{j}$ coupling to the system via observables $A_{j}$, for instance with linear coupling 
\begin{align}\label{linear}
H_\cC=H_\cS + \sum_{j}g_{j}A_{j},
\end{align}
leading to a free energy
\begin{equation}
\label{freeenergy}
F = -\frac{1}{\beta}\log \tr e^{-\beta H_\cC}.
\end{equation}

We follow \cite{Percus75} and refer to the expansion coefficients of
$F$ with respect to $\gamma_{1},\ldots, \gamma_{n}$,
i.e. $\partial^\gamma_a F|_{\gamma=0}$ for a subset $a$ of the
$\gamma$'s, as \emph{generalized susceptibilities}.  We shall now
construct a weak measurement scenario where the correlation of the
pointers, as measured by the joint cumulant of the pointer
observables, turns out to be directly proportional to these
generalized susceptibilities.

Instead of letting a collection of pointers weakly interact with the
system for specific times, here we let the pointers and system
equilibrate under the assumption of weak interactions. When this
combined system has equilibrated we separate the pointers and, as
before, measure a collection of observables on them.  We
therefore consider a total Hamiltonian
\begin{align}
H=\hat{1}_{\mathcal{P}}\otimes H_{\mathcal{S}}+\sum_{j=1}^n \gamma_j
\frac{s_j}{\beta}\otimes A_j,
\end{align}
where $\gamma_k=\beta g_k$, and we assume that the system and the
pointers reach the thermal equilibrium state under this
Hamiltonian, yielding
\begin{equation}
\rho = \frac{e^{-\beta H}}{\tr e^{-\beta H}}.
\end{equation}
 The expectation for pointer measurements is the $\bracket$ $\langle
\Pi_{j\in a} r_{j} \rangle_{\rho}=\tr(\rho\Pi_{j\in a} r_{j} )$.  We
also define the $\bracket$
\begin{equation}
\label{medef}
\mathcal{E}(a) =\frac{\partial_{a}^{\gamma}\tr\{e^{-\beta H_\cS-\sum_j\gamma_jA_j}\}|_{\gamma=0}}{\beta^{|a|} \tr e^{-\beta H_\cS}},
\end{equation}
which gives, up to a normalising factor, the Taylor coefficients in
the expansion of the partition function with respect to
$\gamma$.

\begin{theorem}[Weak measurement of a system in equilibrium]\label{sim2}
Let $a$ be a subset of the finite  collection of pointers $\Omega$. To the lowest joint order in $\gamma$ we find
\begin{equation}
\label{thermalresult}
\log^* \langle \Pi_{j\in a} r_{j} \rangle_{\rho}= (\Pi_{j\in a}
\gamma_j) \xi \log^* \mathcal{E}(a)=-\beta (\Pi_{j\in a}
\gamma_j) \xi \partial^\gamma_a F|_{\gamma=0},
\end{equation}
\begin{equation}
\xi = \Pi_{j\in a}(\tr(r_{j}s_{j})-\tr(r_{j})\tr(s_{j})).
\end{equation}
\end{theorem}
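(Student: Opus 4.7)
My plan is to follow closely the template of Theorem \ref{newtheorem}, exploiting the formal analogy $i\tau\leftrightarrow\beta$ between the thermal exponential $e^{-\beta H}$ here and the real-time unitary $e^{-i\tau H}$ in Theorem \ref{sim1}. The argument should in fact be structurally cleaner, because $e^{-\beta H}$ appears alone in $\rho$ rather than conjugating a state, so the $L_j^{\textrm{left}}/L_j^{\textrm{right}}$ split and the $w_{c_1,c_2}$ bipartition used in Theorem \ref{newtheorem} become unnecessary. First, I would replace each $r_j$ by $r_j-\tr(r_j)\hat{1}_j$: substituting any single $r_k$ by $\hat{1}_k$ yields an $\bracket$ that factorizes with respect to the partition $\{\{k\},a\setminus\{k\}\}$, so its cumulant vanishes, and multilinearity of $\log^*$ in each pointer observable then allows the shift without changing $\log^*\langle\Pi_{j\in a}r_j\rangle_\rho$. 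Applying scale invariance (\ref{scale_invariance}) to the $\gamma$-dependent partition function in the denominator of $\rho$ then lets me work with the cleanly normalised $\bracket$
\begin{equation*}
v(a) = \frac{\tr\{\Pi_{j\in a}(r_j-\tr(r_j)\hat{1}_j)\cdot e^{-\beta H}\}}{\tr e^{-\beta H}|_{\gamma=0}},
\end{equation*}
which satisfies $\log^*\langle\Pi_{j\in a}r_j\rangle_\rho=\log^* v$.

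Second, I would expand $\log^* v$ to lowest joint order in $\gamma$ by the same vanishing argument as in Theorem \ref{newtheorem}: in the partition sum (\ref{log-explicit}), each insertion $r_j-\tr(r_j)\hat{1}_j$ must be paired with some $\partial_{\gamma_j}$ (to produce a matching $s_j\otimes A_j$ from the Dyson expansion), so $\partial_b^\gamma\log^* v(a)|_{\gamma=0}=0$ unless $b\supseteq a$, and at $b=a$ the derivative of the cumulant equals the cumulant of the derivative, giving $\log^* w(a)$ with $w(a)=\partial_a^\gamma v(a)|_{\gamma=0}$. To evaluate $w(a)$, I would apply Lemma \ref{Dyson-lemma} with $\tau=\beta$, $Y=-\hat{1}_\mathcal{P}\otimes H_\cS$, and $X_j=-s_j\otimes A_j/\beta$, producing a sum over permutations $\pi$ of $a$ and nonnegative imaginary-time intervals $\tau_1,\ldots,\tau_{|a|+1}$ constrained by $\sum\tau_j=\beta$, of ordered products $e^{-\tau_{k+1}\hat{1}\otimes H_\cS}(s_{a_{\pi(k)}}\otimes A_{a_{\pi(k)}})\cdots e^{-\tau_1\hat{1}\otimes H_\cS}$.

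Third, because distinct pointers commute and $\hat{1}_\mathcal{P}\otimes H_\cS$ acts trivially on $\mathcal{P}$, the trace factors into a pointer trace and a system trace. The pointer trace is $\pi$-independent and collapses to the product $\xi(a)=\Pi_{j\in a}(\tr(r_js_j)-\tr(r_j)\tr(s_j))$, up to dimensional factors from the idle pointer subsystems that cancel against $\tr e^{-\beta H}|_{\gamma=0}$. The surviving system trace, summed over $\pi$ and integrated against $\delta(\beta-\sum\tau_j)$, is precisely what Lemma \ref{Dyson-lemma} produces when applied to the numerator of (\ref{medef}), so it is proportional to $\mathcal{E}(a)$. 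Consolidating constants yields $w(b)=\xi(b)\mathcal{E}(b)$ as $\bracket$s. The key algebraic observation is then that, because $\xi(b)=\Pi_{j\in b}\xi_j$ is multiplicative across pointer indices, $\Pi_{c\in p}\xi(c)=\xi(a)$ for every $p\in\lambda(a)$, so the sum over partitions in (\ref{log-explicit}) factors cleanly and
\begin{equation*}
\log^* w(a) = \xi(a)\log^*\mathcal{E}(a),
\end{equation*}
giving the first equality in (\ref{thermalresult}). The second equality is a direct Fa\`{a} di Bruno identity, since $\mathcal{E}$ is (up to powers of $\beta$) the moment $\bracket$ of $Z(\gamma)$ and $F=-\beta^{-1}\log Z$.

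I expect the principal obstacle to be the book-keeping: correctly balancing the $\beta$-powers coming from the $1/\beta$ in the pointer coupling, the $\beta^{|a|}$ in (\ref{medef}), the signs $(-1)^{|a|}$ from the Dyson expansion, and the dimensional factors from the idle pointer traces, so that they conspire via the normalisation of the $\gamma=0$ partition function to reduce the constant between $w$ and $\xi\mathcal{E}$ to unity. Once those constants are reconciled, the algebraic backbone is an almost mechanical transcription of the Theorem \ref{newtheorem} strategy under $i\tau\leftrightarrow\beta$.
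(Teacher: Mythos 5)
Your proposal follows essentially the same route as the paper's own proof: replace $r_j$ by $r_j-\tr(r_j)\hat{1}_j$, use scale invariance to absorb the normalisation, interchange $\partial^\gamma_a$ with $\log^*$ at lowest joint order, factor the Dyson-expanded trace into a pointer part (giving $\xi$) times the system part $\partial^\gamma_a\tr e^{-\beta H_\cC}|_{\gamma=0}$, pull the multiplicative $\xi$ through the cumulant, and obtain the second equality from the generating-function identity in the Appendix. The constant/sign/dimension bookkeeping you flag is indeed the delicate point, but the paper glosses over it in exactly the same way, so your argument is at the same level of rigor as the original.
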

This assumes that the various traces $\tr(r_{j})$, $\tr(s_{j})$, and
$\tr(r_js_j)$ are well defined, which in the case of infinite
dimensional Hilbert spaces requires them to be trace class
\cite{ReedSimon}. The theorem tells us that the joint cumulant
of the pointers is directly proportional to the generalized
susceptibility of the system.

As an application of these ideas, consider a collection of
pointers in equilibrium with a heat bath. Correlations between the
pointers will be generated by the heat bath, and these can be
characterised by local observables. Our theorem says that, if the
coupling of the pointers to the heat bath is weak, the cumulants of
these local observables will be proportional to the generalized
susceptibilities of the heat bath.

As mentioned above, there is an analogy between thermal weak
measurement and simultaneous weak measurement.  To see this, we can
use Lemma \ref{Dyson-lemma} to expand the $\bracket$ $\mathcal{E}$ in
weak values as
\begin{align}\label{Ea}
\mathcal{E}(a)=\frac{1}{\beta^k}\sum_{\sigma} \int_{\tau_{k+1},\ldots, \tau_{1} \geq 0} 
(A_{a_{\sigma(k)}}, \ldots, A_{a_{\sigma(1)}})_e[\tau_{k+1},\ldots , \tau_1]\delta(\beta-\sum_{j=1}^{k+1} \tau_j ) d\tau_{1}\cdots d\tau_{k+1},
\end{align}
where
\begin{align}
\label{equilib} (A_k, \ldots, A_1)_e[\tau_{k+1},\ldots ,\tau_1]=\frac{\tr\left[e^{-\tau_{k+1}H_\cS}A_k e^{-\tau_{k}H_\cS}\cdots A_1e^{- \tau_{1}H_\cS} \right]}{\tr\left[ e^{-\beta H_\cS} \right]}.
\end{align}
Comparing (\ref{Ea}) and (\ref{equilib}) with (\ref{mathcDdef}) and
(\ref{evolution-weak-value}), respectively, the $t \leftrightarrow it$
correspondence is clear; this allows us to carry over a large part of
the proof of Theorem \ref{sim1} to the present theorem.

\begin{proof}[Proof of Theorem]
Consider the $\bracket$  $\langle \Pi_{j\in a} r_{j} \rangle_{\rho} = \tr(\rho\Pi_{j\in a} r_{j} )$.  In the analogue of
 (\ref{newtildez}), instead of replacing $r_{j}$ by $(r_j- \langle
r_j \rangle_{\phi_j}\hat{1}_j)$, we replace it by $(r_{j}-
\tr(r_{j})\hat{1}_{j})$, the trace playing the role previously taken
by the expectation.  As before, this modification of the pointer
observables does not change the cumulants,  $\log^{*}\langle \Pi_{j\in a} r_{j} \rangle_{\rho} =  \log^{*}\langle \Pi_{j\in a}\big( r_{j} -\tr(r_{j})\hat{1}\big)\rangle_{\rho}$.
Defining 
\begin{equation}
X_{\cP} = \frac{\Pi_{j \in a}(r_j- \tr(r_j)\hat{1}_j)}{\tr e^{-\beta H_{\cS}}}
\end{equation}
we use scale-invariance, (\ref{scale_invariance}), to show
\begin{align}
\label{line1} \partial_{a}^{\gamma} \langle \Pi_{j\in a}\big( r_{j} -\tr(r_{j})\hat{1}\big)\rangle_{\rho}  |_{\gamma=0} 
&=\partial_{a}^{\gamma} \log^* \tr(e^{-\beta H}X_\cP)|_{\gamma=0} \\
\label{line2} &=\log^* \partial_{a}^{\gamma} \tr(e^{-\beta H}X_\cP)|_{\gamma=0} \\
\nonumber&=\log^* \tr(\Pi_{j \in a} s_j X_\cP)\ \partial_{a}^{\gamma}
 \tr(e^{-\beta H_\cC})|_{\gamma=0}\\ 
\nonumber&=\xi \log^* \cE(a),
\end{align}
where (\ref{line2}) follows from (\ref{line1}) by the same argument
that derived (\ref{expansion}) from (\ref{wdef}).  Furthermore,
\begin{align}
\label{scinv}\xi \log^* \cE(a) & = \xi \log^*\partial_{a}^{\gamma} \tr(e^{-\beta H_\cC})|_{\gamma=0}\\
\label{appendixcheck} &=\xi \partial_{a}^{\gamma} \log \tr(e^{-\beta
H_\cC})|_{\gamma=0}\\ 
\nonumber &=-\beta \xi \partial_{a}^{\gamma} F |_{\gamma=0},
\end{align}
where (\ref{scinv}) follows from scale-invariance,
(\ref{scale_invariance}), and (\ref{appendixcheck}) follows directly
from the definition of $\log^*$ using (\ref{lg}); see the Appendix for
details.
\end{proof}

Note that in the proof of Theorem \ref{sim1}, $e^{-i\tau H}$
operates on the left of $P_i$ and its conjugate operates on the right,
which leads to the real part, $Re\{\xi \log^* \cD(a)\}$, appearing in
(\ref{simevol}). In the above theorem $e^{-\beta H}$ appears
without its conjugate, so we get the whole of $\xi \log^* \cE(a)$ in
(\ref{thermalresult}).




\section{\label{oldtheorem} New proof of the original theorem}
Finally, we shall show how our moment algebra methods can be used to
prove the original theorem in \cite{GraemeCumulants}. Note that we here prove a slight generalization of  Theorem \ref{old-theorem} in the sense that we allow the cumulants to be taken over arbitrary subsets $a$ of the total collection of pointers $\Omega$.
\begin{proof}[Proof of Theorem \ref{old-theorem}]
\begin{align}
\label{x-def}x(a)&=\frac{\tr \left(P_f \mathcal{U}_{n+1}X_{n}\mathcal{U}_{n}\cdots X_{1}\mathcal{U}_{1}P_i\right)}{|\langle\psi_{f}|U_{n+1}\cdots U_{1}|\psi_{i}\rangle|^{2}},\quad X_{j} = \left\{\begin{matrix} r_{j}e^{\gamma_{j}L_{j}} & \textrm{if} & j\in a\\
\hat{1} & \textrm{if} & j\notin a\end{matrix}\right.\\
\label{y-def}y(a)&= \frac{\tr \left(P_f \mathcal{U}_{n+1}Y_{n}\mathcal{U}_{n}\cdots Y_{1}\mathcal{U}_{1}P_i\right)}{|\langle\psi_{f}|U_{n+1}\cdots U_{1}|\psi_{i}\rangle|^{2}},\quad Y_{j} = \left\{\begin{matrix} e^{\gamma_{j}L_{j}} & \textrm{if} & j\in a\\
\hat{1} & \textrm{if} & j\notin a\end{matrix}\right.\\
\label{z-def}z(a)&=\langle \Pi_{j\in a} r_{j} \rangle=\frac{x(a)}{y(a)},
\end{align}
so the $\bracket$ $z(a)$ is the expectation for pointer measurements
in the subset $a$. Note that the normalizations of the $\bracket$s $x$
and $y$ have been chosen so that $x(\emptyset) = 1$ and
$y(\emptyset) = 1$. For $y$ another convenient property is
that $y(a)|_{\gamma = 0} = 1$ for all $a$. 

The idea of the proof is as follows: if the ratio in (\ref{z-def})
were defined in terms of convolution operations, so we had
$z=x*y^{-1*}$ instead of $z=xy^{-1}$, then this would imply
$\log^{*}z=\log^{*}x-\log^{*}y$, leaving us with the much simpler task of
calculating $\log^{*}x$ and $\log^{*}y$. In fact, it turns out that, by
expanding the $\bracket$ $z$ in powers of the $\gamma$'s, we can achieve this
switch from multiplicative to convolution operations: see
(\ref{star-version}). 

We first prove the
equivalent of (\ref{expansion}) for $z$:
\begin{equation}\label{z-expansion}
 \partial^\gamma_b \log^* z(a)\vert_{\gamma=0} = \left\{ \begin{matrix}
\log^*\tilde{z}(a) & \textrm{if} & b = a,\\
0 & \textrm{if} & \textrm{$b \subset a$, $b\neq a$}\\
\end{matrix}\right.
\end{equation}
where $\tilde{z}(a)=\partial^\gamma_a z(a)\vert_{\gamma=0}$. To this
end, for any $b\subseteq a$ define the $\bracket$ $z_b$ by
$z_b(c) = \partial^{\gamma}_{b\cap c}z(c)|_{\gamma=0}$ for all
$c\subseteq a$. By (\ref{log-explicit}),
\begin{eqnarray}\label{partition-sum}
\partial^\gamma_b\log^* z(a)|_{\gamma=0} & = &\sum_{p\in \pi(a)} (|p|-1)!(-1)^{|p|-1}
 \partial^\gamma_b\prod_{c\in p} z(c)|_{\gamma= 0} \nonumber\\
 & = & \sum_{p\in \pi(a)} (|p|-1)!(-1)^{|p|-1} \prod_{c\in p} \partial^\gamma_{b\cap c}z(c)|_{\gamma=0}=\log^* z_b(a),
\end{eqnarray}
using the fact that $z(c)$ depends only on the $\gamma$'s in
$c$. Since $z_a=\tilde z$, (\ref{z-expansion}) follows for the case
$b=a$. If $b \neq a$, there is some $j \in a$ with $j \notin b$. When
we put $\gamma_j=0$, since (\ref{partition-sum}) shows the term
$e^{\gamma_j L_j}$ is not differentiated, the operator $L_j$ does not
appear in $z_b(a)$. Therefore $r_j$ is not coupled to $\cS$ and $z_b$
factorises on $a$. We conclude that $\log^* z_b(a)=0$.

Equation (\ref{z-expansion}) tells us that $\log^* \tilde z(a)$ is the
first non-vanishing term in the expansion of the cumulant $\log^*
\langle \Pi_{j\in a}r_j \rangle$ in the $\gamma$'s. In other words,
the relevant expansion coefficient of the latter cumulant can be
regarded as the cumulant of the new $\bracket$ $\widetilde{z}$. To
calculate this new cumulant we use the usual law for differentiation
of a product, together with the observation that if $(b_{1},b_{2})$ is
bipartition of $b$ then
\begin{equation*}
[\partial^\gamma_{b_1}x(b)]\big|_{\gamma=0} =
(\Pi_{j\in b_2}\langle r_{j}
\rangle_{\phi_{j}})[\partial^\gamma_{b_1}x(b_1)]\big|_{\gamma=0},\quad [\partial^\gamma_{b_2}\frac{1}{y(b)}]\big|_{\gamma=0} =
[\partial^\gamma_{b_2}\frac{1}{y(b_{2})}]\big|_{\gamma=0},
\end{equation*}
which yields
\begin{align}
\nonumber \tilde z(b)=\partial^\gamma_b \frac{x(b)}{y(b)}\bigg|_{\gamma=0}&=\sum_{b_1 \cup b_2=b} [\partial^\gamma_{b_1}x(b)]\Big[\partial^\gamma_{b_2}\frac{1}{y(b)}\Big]\bigg|_{\gamma=0}\\
\nonumber &=\sum_{b_1 \cup b_2=b}(\Pi_{j\in b_2}\langle r_{j} \rangle_{\phi_{j}})[\partial^\gamma_{b_1}x(b_1)]\Big[\partial^\gamma_{b_2}\frac{1}{y(b_2)}\Big]\bigg|_{\gamma=0}\\
\nonumber &=\sum_{b_1 \cup b_2=b} \tilde x(b_1) \tilde y^{-1*}(b_2)\\
\label{star-version} &={\tilde x}*{\tilde y}^{-1*}(b),
\end{align}
where we have the new $\bracket$s
\begin{align}
\label{x-tilde} \tilde x(b)&=\partial^\gamma_b x(b)\vert_{\gamma=0},\\
\tilde y(b)&= (\Pi_{j\in b} \langle r_{j} \rangle_{\phi_{j}})\partial^\gamma_b
y(b)\vert_{\gamma=0}.
\end{align}
With the aid of (\ref{logprod}) we thus find that the cumulant of the $\bracket$ $\tilde{z}$ can be decomposed as
\begin{align}
\log^* \tilde z(a)=\log^* {\tilde x}(a)-\log^* {\tilde y}(a).
\end{align}

We next turn to the evaluation of $\log^*\tilde{x}(a)$.
Equations (\ref{x-def}) and (\ref{x-tilde}) imply
\begin{equation}
\label{tildex}\tilde{x}(a)=\frac{ \tr \left(P_f \mathcal{U}_{n+1}\widetilde{X}_{n}\mathcal{U}_{n}\cdots \widetilde{X}_{1}\mathcal{U}_{1}P_i\right)}{|\langle\psi_{f}|U_{n+1}\cdots U_{1}|\psi_{i}\rangle|^{2}},\quad \widetilde{X}_{j} = \left\{\begin{matrix} r_{j}L_{j} & \textrm{if} & j\in a\\
\hat{1} & \textrm{if} & j\notin a\end{matrix}\right.
\end{equation}
This should be compared with (\ref{wdef}) in the proof of Theorem
\ref{newtheorem}. Just as in that proof, where we defined
$w_{c_1,c_2}$, here we define $\tilde{x}_{c_1,c_2}$ where $L_j$ is
replaced by $L_j^{\textrm{left}}$ if $j \in c_1$ and by $L_j^{\textrm{right}}$ if $j \in
c_2$. By the same argument, we find $\log^*{\tilde x}(a)=\log^*{\tilde
x}_{a,\emptyset}(a)+\log^*{\tilde x}_{\emptyset,a}(a)$, and direct
calculation gives
\begin{equation}
\tilde{x}_{a,\emptyset}(a)=(-i)^{|a|} (\Pi_{j\in a} \langle r_{j}s_{j} \rangle_{\phi_{j}}) A_w(a),
\end{equation}
and $\tilde{x}_{\emptyset,a}(a)$ gives the complex conjugate.
Thus
\begin{align}
\log^* {\tilde x}(a)=Re\{2(-i)^{|a|} (\Pi_{j\in a} \langle r_{j}s_{j} \rangle_{\phi_{j}}) \log^* A_w(a)\},
\end{align}
and this gives (\ref{main-result}) with the first part of $\xi$
(see (\ref{xi})). The evaluation of the cumulant of
$\log^{*}\tilde{y}(a)$ is analogous to the above, and results in the
second half of $\xi$.
\end{proof}




\section{\label{multisets}Multisets and multiset cumulants}

Our definition of cumulants in Section \ref{convalgebra} was somewhat
unconventional. A more standard definition for the classical cumulant
is
\begin{equation}
\label{standard}
\log^{*}\langle\Pi_{j\in a}X_{j}\rangle = \partial^\gamma_a \log\langle
e^{\sum_{k=1}^{|a|}\gamma_{k}X_{k}}\rangle|_{\gamma=0},
\end{equation}
which is readily seen (Appendix) to be equivalent to ours given
by (\ref{classical}). Thus cumulants can be thought of as
coefficients in a formal power series expansion in the $\gamma$'s. To
a statistician, the expression being expanded is the logarithm of the
moment generating function. To a physicist, a comparison with the
Helmholtz free energy, (\ref{freeenergy}), is compelling, and
indeed there is a strong connection with thermodynamics
\cite{Royer}. A combinatorialist can also stake a claim \cite{Aigner,
Wilf}.

The formal expansion for two variables begins
\begin{equation}
\label{standardexp}
\log\langle e^{\gamma_{1}X_{1}+
\gamma_{2}X_{2}}\rangle=\gamma_1 \langle X_1\rangle+\gamma_2
\langle X_2\rangle-\frac{\gamma_1^2}{2}[\langle X_1^2 \rangle -
\langle X_1 \rangle^2]-\frac{\gamma_2^2}{2}[\langle X_2^2
\rangle - \langle X_2 \rangle^2]-\gamma_1 \gamma_2[\langle X_1X_2
\rangle - \langle X_1 \rangle\langle X_2 \rangle]+ \ldots,
\end{equation}
and the coefficient of $\gamma_1\gamma_2$ is familiar as the
classical cumulant $\log^*\langle X_1X_2 \rangle$. However, this
expansion also forces on one's attention terms in higher powers of the
$\gamma$s, such as $\gamma_1^2(\langle X_1^2 \rangle - \langle X_1
\rangle^2)/2$.  The moment algebra, as we introduced it in
Sec.~\ref{convalgebra}, does not include such terms. However, with a
slight modification of the construction of the algebra, all of these
higher order terms can be incorporated. The natural setting for it is
not subsets of a set, but \emph{multisets} \cite{Blizard}.  In a
multiset, an element may occur any finite number of times. For
instance, if the underlying set is $\Omega_{3} = \{1,2,3\}$, then
$\{1,1\}$, and $\{1,1,1,2,3,3\}$ are two examples of multisets in
$\Omega_{3}$ (Note that, as with ordinary sets, the
ordering does not matter, i.e., $\{1,1,1,2,3,3\} = \{3,1,2,3,1,1\}$ .)

To extend the moment algebra we define $\bracket$s on multisets
rather than subsets, so an $\bracket$ $f$ assigns a complex
number $f(a)$ to every multiset $a$ of $\Omega$. The whole machinery
of formal derivatives and their action on composite functions, as
presented in Sec.~\ref{convalgebra}, goes through essentially
unaltered. In particular, given an $\bracket$ $f$ defined on multisets,
we can define higher order cumulants $\log^{*}f(a)$ for any multiset
$a$. \\

As an example, consider the multiset that consists of $\{1\}$,
$\{1,1\}$, $\{1,1,1\}$, etc.. Suppose $f$ is a function on this
multiset. Then we can apply (\ref{Fstar}) to $a=\{1,1,1\}$, for
instance, so that from
\begin{align*}
\partial_{\{1,1,1\}} \log f =\frac{f^{\prime \prime \prime}}{f}
-3\frac{f^{\prime \prime}f^\prime}{f^2} +2\frac{(f^\prime)^3}{f^3},
\end{align*}
(with primes denoting differentiation by variable 1), we deduce
\begin{align*}
\log^*f(\{1,1,1\})=\frac{f(\{1,1,1\})}{f(\emptyset)}-3\frac{f(\{1,1\})f(\{1\})}{f(\emptyset)^2}+2\frac{f(\{1\})^3}{f(\emptyset)^3}.
\end{align*}
As a special case, we define $f(\{\underbrace{1,1, \ldots
,1}_k\})=\langle X^k\rangle$, for some random variable $X$. Then we
can rewrite the above expression as
\begin{eqnarray}
\log^{*}f(\{1,1,1\}) & = &\langle X^3 \rangle -3\langle X^2 \rangle\langle X
\rangle+2\langle X \rangle^3 \equiv \kappa_3(X).\nonumber
\end{eqnarray}
In this way we obtain the classical cumulant $\kappa_3(X)$, some
others in the series being $\kappa_{1}(X)=\langle X\rangle$ and
$\kappa_2(X)=\langle X^2 \rangle -\langle X \rangle^2$. These
cumulants are widely used in statistics. They have nice properties;
for instance, $\kappa_j(X)=0$ for $j\geq 3$ is a necessary and
sufficient condition for $X$ to be Gaussian \cite{Marcinkiewicz}.

It is easy to see that any multiset cumulant can be obtained from our
original cumulant with distinct variables simply by setting certain
subsets of its variables equal. For instance, $\log^{*}f(\{1,1,1\})$
can be derived from $\log^{*}f(\{1,2,3\})$ by setting every `2' and
`3' to a `1' (and leaving `1's unchanged). Similarly
$\log^{*}f(\{1,1,2\})$ can be obtained from $\log^{*}f(\{1,2,3\})$ by
sending `3' to `1'. 

We can mimic this procedure in the case of multipartite weak
measurements by treating subsets of pointers identically; i.e., we can
couple all the pointers in each subset via the same system observable
and afterwards measure the same pointer observable on each of them. As
a simple case, suppose we have just two pointers and we couple both to
the system through the interaction Hamiltonian $H=s \otimes A$ and
finally carry out the same measurement $r$ on them; note though we
have to give labels to $s$ and $r$ to indicate which pointer they
belong to; so we have $s_1$, $s_2$ and $r_1$, $r_2$, respectively, for
pointers 1 and 2. To avoid the complication of ordering of the
couplings of pointers, let us consider simultaneous weak
measurement. Then Theorem \ref{old-simultaneous-theorem} gives
\begin{align} \label{k2}
\langle r_1r_2 \rangle -\langle r_1\rangle\langle r_2\rangle =
\gamma^2 Re \left\{ \xi \kappa_2(A)_w \right\}
\end{align}
where we have suggestively written $\kappa_2(A)_w$ for the
simultaneous weak value $(A^2)_w-(A_w)^2$ given by
(\ref{simweak}). 

This gives us a procedure for gaining information about the multiset
cumulants of weak values. With thermal weak measurement we can use
this procedure to measure higher order susceptibilities. If we couple
$m_{1}$ independent pointers to observable $A_{1}$, $m_{2}$ pointers
to $A_{2}$, etc, Theorem \ref{sim2} gives a direct relationship
between the correlation of the pointers and the relevant
susceptibility:
\begin{equation}
\log^{*}\langle \Pi_{j=1}^{n}
\Pi_{l_{j}=1}^{m_{j}}r_{j}^{(l_{j})}\rangle = -\beta (\Pi_{j=1}^{n}
\Pi_{l_{j}=1}^{m_{j}}\gamma_{j, l_{j}})\xi \frac{\partial^{m_{1}} }{\partial
\gamma_{1}^{m_{1}}}\cdots\frac{\partial^{m_{n}}}{\partial
\gamma_{n}^{m_{n}}}F|_{\gamma=0}.
\end{equation}




\section{Conclusions}

In physics we often study the effects of weak coupling between
systems. By focussing on the effect of one system upon the other, weak
measurement gives a way of understanding the nature of such
interactions. If we weakly couple two systems, $\cS$ and $\cP$,
say, and then carry out a strong measurement on $\cP$, the effects
of the coupling can be expressed in terms of weak values
\cite{AAV88,AharonovRohrlich05}. 
When $P$ consists of a very simple system, e.g., a ``pointer'' or particle
in a given initial state, the measurement results depend on weak
values in a very simple way (\ref{weakvalue}).
As $\cP$ becomes more complicated, the
dependency becomes rapidly more complicated, and in fact already
assumes a highly baroque form when $\cP$ consists of two pointers
applied at different times (see the Appendix of
\cite{GraemeCumulants}). However, this complication vanishes if one
takes cumulants: the cumulant of the measured variables and the
cumulant of the weak values are once more simply related. The aim of
this paper has been to give proofs of this fact that illuminate why
this phenomenon occurs.

The proofs of our various theorems repeatedly use two properties of
cumulants: first, that they are logarithms in the moment algebra
$\cM_n$, and turn a convolution product into a sum; second, that they
vanish on maps that factorise, i.e., that can be written as a product
of two maps that are defined on disjoint sets of variables. All the
maps that we construct are elements of $\cM_n$, which is
therefore the natural setting for the proofs. This algebra is in
itself an interesting object. Although some of its features
have been thoroughly described in the literature, we are not aware
of any explicit formulation of the algebra as an object in its own
right.  We feel that it deserves this recognition, because of the
simplicity of the definitions of its operations, and the surprising
richness of the structure that this gives rise to.

Cumulants have long played a role in statistical mechanics \cite{Kubo,
KahnUhlenbeck, Sylvester, Royer}, where they are used to simplify
perturbation expansions. This suggests alternative weak measurement
scenarios. In this spirit we consider a collection of pointers that
reach thermal equilibrium with a probed system, and call this a
``thermal weak measurement". By its very nature, this excludes pre-
and postselection, which are standard components of weak
measurement. We lose thereby some of the strengths of weak
measurement: many of the more intriguing phenomena in the standard
setting arise from postselection. However, we retain the advantages of
minimal perturbation of a system, and the possibility of applying
several probes simultaneously or sequentially opens up some new
territory for exploration.




\section{Appendix: Defining cumulants via generating functions} 

Here we show that the standard definition of the classical cumulant
via the generating function (\ref{standard}) is equivalent to
our definition (\ref{lg}) of $\log^*f$ for the $\bracket$
$f(a)=\langle \Pi_{j \in a} X_j \rangle$.  In other words, we wish to
show that, for any multiset $a$,
\begin{equation}
\label{avavn}
\log^{*}f(a) = \partial_{a}^{\gamma}\log h(\gamma)|_{\gamma=0},
\end{equation}
where
\begin{equation}
\label{fforv}
h(\gamma)=\langle e^{\sum_k \gamma_k X_k} \rangle.
\end{equation}
We first note that 
\begin{equation}
\label{sfdbn}
\partial_{a}^{\gamma}\log h = \Lambda(h,\partial_{1}^{\gamma}h,\partial_{2}^{\gamma}h,\partial^{\gamma}_{1,2},\ldots),
\end{equation}
where $\Lambda$ is a function of all the relevant partial derivatives
of $h$, obtained via the chain rule when we apply
$\partial_{a}^{\gamma}$ to the logarithm. The same function $\Lambda$
appears when we express the cumulant $\log^{*}f(a)$ in terms of
formal derivatives
\begin{equation}
\label{ankldv}
\begin{split}
\log^{*}f(a) = &  \partial_{a}\log f(a)\\
 = & \Lambda(f,\partial_{1}f,\partial_{2}f,\partial_{1,2}f,\ldots)\\
 = &  \Lambda(f(\emptyset),f(1),f(2),f(1,2),\ldots).
\end{split}
\end{equation}
To obtain (\ref{avavn}) from (\ref{sfdbn}) and (\ref{ankldv}) we only
need observe that $\partial_{c}^{\gamma}h|_{\gamma =0} = \langle
\Pi_{j \in c} X_j \rangle = f(c)$.

This equivalence of definitions can be extended to other functions $h$
if we take $f(a) = \partial_{a}^{\gamma}h|_{\gamma = 0}$. For example,
let $h(\gamma)=\tr ( e^{-\beta H_\cC} )=\tr ( e^{-\beta H_\cS -
\sum_{j}\gamma_{j}A_{j})} )$. Then the above arguments prove
(\ref{appendixcheck}), which occurs in the proof of Theorem
\ref{sim2}.  Note that the arguments extend without difficulty to
non-commutative observables.

\end{document}